\tikzstyle{vecArrow} = [thick, decoration={markings,mark=at position
\tikzstyle{innerWhite} = [semithick, white,line width=1.4pt, shorten >= 4.5pt]
\renewcommand{\paragraph}{\roman{paragraph}}
\newtheorem{theorem}{Theorem}[section]
\newtheorem{corollary}[theorem]{Corollary}
\newtheorem{question}[theorem]{Question}
\newtheorem{problem}[theorem]{Problem}
\newtheorem{lemma}[theorem]{Lemma}
\newtheorem{Remark}[theorem]{Remark}
\newenvironment{proof}{\noindent {\bf Proof.}}{\rule{3mm}{3mm}\par\medskip}
\newcommand{\C}{\mathcal {C}}
\begin{document}

\title{Optimal Fraction Repetition Codes for Access-Balancing in Distributed Storage}

\author{Wenjun Yu, Xiande Zhang and Gennian Ge

\thanks{W. Yu ({\tt yuwenjun@mail.ustc.edu.cn}) and X. Zhang ({\tt drzhangx@ustc.edu.cn}) are with School of Mathematical Sciences,
University of Science and Technology of China, Hefei, 230026, Anhui, China. The research of X.Zhang was supported by the National Natural Science Foundation of China under Grant No. 11771419.  }
\thanks{G. Ge ({\tt gnge@zju.edu.cn}) is with  the School of Mathematical Sciences, Capital Normal University,
Beijing 100048, China. The research of G. Ge was supported by the National Natural Science Foundation of China under Grant No. 11971325, National Key Research and
Development Program of China under Grant Nos. 2020YFA0712100  and  2018YFA0704703, and Beijing Scholars Program.
}}
\date{}
\maketitle

\begin{abstract}
To solve the access-balancing problem in distributed storage systems, we introduce a new combinatorial model, called MinVar model for   fractional repetition (FR) codes. Since FR codes are based on graphs or set systems, our MinVar model is characterized by the property that the variance among the sums of block-labels incident to a fixed vertex is minimized. This characterization is different from Dau and Milenkovic's MaxMinSum model, while the minimum sum of labels is maximized. We show that our MinVar model is meaningful by distinguishing labelings with different variances  but with the same MaxMin value for some FR codes. By reformulating the MinVar model to an equivalent vertex-labeling problem of graphs, we find several families of optimal FR codes with balanced access frequency, and provide fundamental results for both problems. It is interesting that MinVar model is closely related to the concept of magic-labeling in graph theory.

%Fractional repetition (FR) code is a family of codes for distributed storage systems (DSSs) that allow for uncoded exact repairs having the minimum repair bandwidth. However, in the distributed storage applications, relating to server access frequency balancing is one of important problem. In \cite{maxminsum}, Dau and Milenkovic applied the MaxminSum to characterize the access-balance, and they provided some MaxMinSum Steiner Systems. Here, we use the minimum variance to characterize the well-balance.  and focus on this problem in the FR codes based on (hyper-)graphs or designs, especially, for Optimal FR codes, with the help of line graph, we show some optimal FR codes which are the access-balance under the minimum variance is equal to find a vertex order in the line graph. Meanwhile, We present some minimum variance optimal FR codes based on (hyper-)graphs and designs.
\end{abstract}

{\bf Keywords:} Distributed  storage; access-balance; fractional repetition codes; access-variance; magic-labeling

\section{Introduction}
Motivated by the access-balancing issue in the coding for distributed  storage systems \cite{dimakis2010network}, Dau
and Milenkovic \cite{maxminsum} introduced a problem of  labeling
the points of the underlying combinatorial designs. In this framework, a file is split into several equal-sized parts and encoded into data chunks by an outer MDS code. After this, each data chunk is replicated a certain number (\emph{replication number}) of times and distributed among multiple storage nodes based on an inner  \emph{fractional repetition (FR) code} \cite{rashmi2009explicit,shah2011distributed,el2010fractional}. The combination of the outer MDS code and the inner FR code supports redundancy and reparability of the storage system, and constitutes a class of \emph{minimum bandwidth regenerating} (MBR) codes \cite{dimakis2010network} with the property of exact \emph{repair by transfer}. The problem of balancing data placements and  loads  of the storage in such a scheme, requires a constant data replication number and  a constant node volume of the inner FR code. This is the main reason that  combinatorial designs such as Steiner systems are commonly employed in data placement \cite{OlmezFractional,olmez2012repairable,zhu2014general}. Systems like Hadoop Distributed File System and  Google File System apply this strategy \cite{cidon2013copysets}.

 Access balancing aims to balance the access
requests to the nodes by using  data chunk popularity information \cite{cherkasova2004analysis}. In Dau and Milenkovic's model \cite{maxminsum},  the data chunks are labeled by popularity, and the overall popularities of chunks stored on each node  need to be balanced. That is, to find a proper labeling of the underling combinatorial design such that the sums of  labels in each block are as equal as possible. In particular, they defined  functions of designs to measure this property,  MaxMin (or MinMax), the  maximum (minimum) value of the minimum (maximum) block-sum in the design, and successfully found all Steiner triple systems that achieve the MaxMin value. This problem was further studied in \cite{brummond2019kirkman} for Kirkman systems and in \cite{chee2019access} for partial Steiner systems.

Although combinatorial designs are commonly used as underlying structures of the storage scheme, they are not usually the best choice for FR codes in general.  Regarding to the maximum size of the file that can be stored in a DRESS code \cite{pawar2011dress,el2010fractional}, Silberstein and  Etzion \cite{silberstein2015optimal} studied optimal FR codes based on graphs and designs. They constructed two kinds of optimal FR codes with replication number two, one is based on  Tur\'an graphs and the other is based on graphs with large girth. For bigger replication number, they showed that transversal designs and generalized polygons can produce optimal FR codes.

\par In this paper, we focus on the access-balancing problem for optimal FR codes. By observing that the MaxMinSum model only cares about the minimum sum of labels, we introduce a new model which considers the variance of the sums of labels. The new model is called MinVar model, which aims to approach the minimum access-variance of overall popularities among all nodes. This is in fact a block-labeling problem of set systems, such that the variance of the sums of labels of blocks incident to any fixed vertex is minimized. When the minimum variance attains zero, the problem is indeed a magic labeling problem for graphs, from which we can find several optimal FR codes with balanced access requests. Our second contribution is introducing an equivalent problem, which is a vertex-labeling problem of graphs when the set system is linear. By solving this problem for special graphs, we estimate the minimum access-variance of the MinVar model for several optimal FR codes.

The paper is organized as follows. Section~\ref{s:2} reviews FR codes, set systems, graphs and their relations. Section~\ref{s:3} introduces the  MinVar model, its relation to the magic labeling problem, and its equivalent vertex-labeling problem of graphs. In Section~\ref{s:4}, we solve the  equivalent vertex-labeling problem for several graphs, which helps to attack the MinVar problem in Section~\ref{s:5}. Concluding remarks are provided in Section~\ref{s:6}.
\section{Preliminaries and Notations}\label{s:2}
In this section we provide useful definitions of codes, graphs and set systems, and their relations among each other.

\subsection{Fractional repetition codes}

El Rouayheb and Ramchandran \cite{el2010fractional} introduced the concept of  DRESS (Distributed
Replication based Exact Simple Storage) code, which consists
of the concatenation of an outer MDS code and an inner
FR code.

Let $[\theta]:=\{1,2,\ldots,\theta\}$. Assume that $n,\alpha,\theta,\rho$ are positive integers satisfying $n\alpha=\theta\rho$. An $(n,\alpha,\rho)$ FR code $\C$ is a collection of $n$ subsets of $[\theta]$, $N_1,N_2,\ldots,N_n$, each of size $\alpha$ such that, each symbol of $[\theta]$ appears in exactly $\rho$ subsets of $\C$.
A $[(\theta, M), k,(n, \alpha, \rho)]$ DRESS code is a code consisting of an outer $(\theta, M)$ MDS code and an inner $(n,\alpha,\rho)$ FR code $\C$. First, a file $\mathbf{f}=(x_1,x_2,\cdots,x_M)\in \mathbb{F}_{q}^{M}$ is encoded by the outer MDS code, and outputs a codeword $y_\mathbf{f}=(y_1,y_2,\cdots,y_\theta)$. Second, every symbol of $y_\mathbf{f}$ is placed on $n$ storage nodes using a way defined by $\C$: place the symbol $y_i$ in the $j$th node if $i\in N_j$ in $\C$. The definition of FR code ensures that each node stores exactly $\alpha$ symbols, and each symbol is placed on exactly $\rho$ nodes.

A valid DRESS code should have the following  two properties.
First, when some node $j$ fails, it is possible to find a set of $d=\alpha$ other nodes, such that each node passing exactly one symbol is able to repair node $j$. The repair bandwidth $d$ is the same as the repair bandwidth of an MBR code.
Second, the stored file should be reconstructed from any set of $k$ nodes, which requires $\min_{|I|=k}|\cup_{i\in I} N_i|\geq M$
%  \begin{equation*}%\label{filesize}
%     \min_{|I|=k}|\cup_{i\in I} N_i|\geq M
%  \end{equation*}
 due to the property of  the outer MDS code. Note that one can assume that $M=M(k)= \min_{|I|=k}|\cup_{i\in I} N_i|$ for a given DRESS code.

To maximize the file size and ensure correct reconstruction and repair, one can require that $|N_i\cap N_j|\leq 1$ for all $i\neq j$ \cite{rashmi2009explicit}. Let $A(n,k,\alpha,\rho)$ be the  maximum file size $M ( k )$ among all $[(\theta, M), k,(n, \alpha, \rho)]$ DRESS codes, which  indeed only depends on the inner FR code. Two upper bounds on  $A(n,k,\alpha,\rho)$ were given in \cite{el2010fractional},
\begin{equation}\label{ub1}
  A(n,k,\alpha,\rho)\leq \left\lfloor \frac{n\alpha}{\rho}\left( 1-\frac{{n-\rho \choose k}}{{n \choose k}} \right)\right\rfloor \text{ and }
\end{equation}
\begin{equation}\label{ub2}
    A(n,k,\alpha,\rho)\leq  \varphi(k), \text{ where }\varphi(1)=\alpha, \varphi(k+1)= \varphi(k)+\alpha-\left\lceil \frac{\rho \varphi(k)-k\alpha}{n-k}\right\rceil.
\end{equation}
An FR code is called \emph{$k$-optimal} if $\min_{|I|=k}|\cup_{i\in I} N_i|=A(n,k,\alpha,\rho)$ for a given $k$, and is \emph{optimal} if it is $k$-optimal for all $k\leq \alpha$.

Silberstein and  Etzion \cite{silberstein2015optimal}   constructed two kinds of optimal FR codes with $\rho=2$, one is based on  Tur\'an graphs and the other is based on graphs with large girth. For $\rho>2$, they showed that transversal designs and generalized polygons produce optimal FR codes.

%$[n]$ is short of $\{1,2,\cdots ,n\}.$
\subsection{Set Systems and Graphs }
\indent For a finite set $V$ of points, let ${V\choose r}$ denote the set of all $r$-subsets of $V$. The pair ${\cal S}=(V,E)$ is called an \emph{$r$-uniform set system} if $E\subset {V\choose r}$. The elements of $E$ are called \emph{blocks}. The  \emph{order} of $\cal S$ is the number of points $|V|$, and the \emph{size} of $\cal S$ is the number of blocks $|E|$. Such a pair $(V,E)$ is also known as a graph if $r=2$, where $V$ and $E$ are commonly referred to vertices and edges, respectively. When $r>2$, $\cal S$ is known as an \emph{$r$-uniform hypergraph} (or \emph{$r$-graph}), where elements of $E$ are referred to hyperedges. A set system is called \emph{linear} if any two blocks intersect on at most one common point. The \emph{$2$-shadow} of ${\cal S}=(V,E)$ is a $2$-uniform set system denoted by $\partial_2 {\cal S}=(V',E')$, where $V'=V,E'=\{\{a,b\}:\{a,b\}\subseteq B, B\in E\}$. Clearly, if ${\cal S}$ is a $2$-uniform set system then $\partial_2 {\cal S} = {\cal S}.$

\par Two points $x,y\in V$ are \emph{adjacent} in $\cal S$, if there exists a block $e\in E$, such that $\{x,y\}\subset e$. A point $x$ is \emph{incident} with a block $e\in E$, if $x\in e$. The \emph{degree} of  $x$ is the number of blocks incident with $x$, denoted by $d(x)$. A set system is said to be \emph{$d$-regular} if $d(x)= d$ for all $ x \in V$, where $d$ is a positive integer.
The \emph{incidence matrix} $I({\cal S})$ of a set system ${\cal S}=(V,E)$ is a binary  $|V|\times |E|$ matrix with rows and columns indexed by $V$ and $E$, respectively, such that $I({\cal S})_{i,e}=1$ if and only if $ i\in e$. The \emph{line graph} $L({\cal S})$ of a set system ${\cal S}=(V,E)$, is a multi-edge graph $(V',E')$, where $V'=E$, and  the number of edges between two blocks  $e,e'\in E$ is $|e\cap e'|$. Note that, when $\cal S$ is linear, the line graph $L({\cal S})$  is a simple graph.  The \emph{dual set system} ${\cal S}^*$ of ${\cal S}$ is a set system with $I({\cal S}^*)=I({\cal S})^T.$

Now we give some definitions commonly used in graph theory.  The set system $\left(V,{V\choose 2}\right)$ is called a \emph{complete graph}, denoted by $K_n$ if $|V|=n$.  A graph is called an \emph{$r$-partite graph} if its vertices can be partitioned into $r$-parts, such that two vertices are adjacent only when they belong to different parts. It is further called \emph{complete $r$-partite} if every two vertices from different parts are adjacent. If a complete $r$-partite graph has parts of size $m_i$, $i\in [r]$, then we denote it by $K_{m_1,m_2,\ldots,m_r}$. A \emph{Tur\'an graph} $T(n,r)$ is an $n$-vertex complete $r$-partite graph, such that all parts are of size either $\lceil\frac{n}{r}\rceil$ or $\lfloor\frac{n}{r}\rfloor$.%Let $r$ and $n$ be two positive integers with $r|n.$

%A graph $G'=(V',E')$ is a \emph{subgraph} of  $G=(V,E)$ if $V'\subset V$ and $E'\subset E$. Further if  for each pair $x,y\in V'$, $\{x,y\}\in E'$ if and only if  $\{x,y\}\in E$, $G'$ is called an \emph{induced subgraph}. We  use $G[V']$ to denote an induced subgraph of $G$ on $V'\subset V$.
The \emph{adjacency matrix} $A(G)$ of a graph $G=(V,E)$ is a $|V|\times |V|$ matrix whose rows and columns are indexed by $V$, such that  $A(G)_{i,j}=1$ if $\{i,j\}\in E$ and $0$ else. The  \emph{neighborhood} of a vertex $x$, denoted by $N(x)$, consists of all vertices $y$ that are adjacent to $x$. A \emph{cycle} in a graph $G$ is a connected $2$-regular subgraph of $G$. Denote ${\cal C}_n=(v_1,v_2,\ldots,v_n)$ a cycle with edges $v_i\sim v_{i+1}$, $i\in[n-1]$ and $v_n\sim v_1$. The \emph{girth} of a graph is the length of its shortest cycle. An \emph{independent set} of $G=(V,E)$ is a set of pairwise nonadjacent vertices. A \emph{perfect matching} of $G$ is a set of disjoint edges that cover all vertices. A graph $G$ is said to be \emph{$1$-factorable} if  $E$ can be partitioned into perfect matchings.

%Here we introduce two famous hypergraphs in design theory.
%
% A $t$-$(n,k,\lambda)$ \emph{design} is a pair $(\mathcal{X},\mathcal{B})$, where $\mathcal{X}$ is a vertex set with size of $n$, block set $\mathcal{B}\subseteq {\mathcal{X}\choose k}$ and any $t$ vertices are contained in exactly $\lambda$ blocks. Thus, a $t$-$(n,k,\lambda)$ design is a $k$-uniform, $\frac{\lambda {n-1\choose t-1}}{{k-1\choose t-1}}$-regular hypergraph with $n$ vertices.
%
% A \emph{transversal design} $TD(h,k)$ is a triple $(\mathcal{X},\mathcal{G},\mathcal{B})$, where $\mathcal{X}$ is a vertex set with size $hk$, $\mathcal{G}$ is a equipartition of $\mathcal{X}$ into $h$ groups, block set $\mathcal{B}\subseteq {\mathcal{X}\choose h}$ with each block meets every group exactly once and any pair vertices from different groups are contained in exactly one block. Thus a transversal design $TD(h,k)$ is a $h$-uniform $k$-regular hypergraph with $hk$ vertices.

\subsection{FR codes based on set systems}\label{s:FRsyst}
Given an $(n,\alpha,\rho)$ FR code $\mathcal {C}$, the \emph{incidence matrix} $I(\mathcal{C})$ is an $n\times \theta$ binary matrix with $n\alpha=\theta \rho$, where rows are indexed by the nodes of the FR code,  columns are indexed by the symbols of  outer MDS codeword, and the entry $I(\mathcal{C})_{i,j}$ is defined as follows:
\begin{equation*}
I(\mathcal{C})_{i,j}=\left\{
\begin{array}{ll}{1} &{\text { if node } i \text{ contains symbol } j, } \\
{0}                  &{\text { otherwise}. }
\end{array}
\right.
\end{equation*}
Note that each row of $I(\mathcal{C})$ has exactly $\alpha$ ones and each column has exactly $\rho$ ones.  It is easy to see that a $\rho$-uniform $\alpha$-regular set system $\cal S$ of order $n$ gives an  $(n,\alpha,\rho)$ FR code $\cal C$ such that $I(\mathcal{C})=I(\mathcal{S})$. Since the transpose of $I(\mathcal{S})$ can be viewed as the incidence matrix of the dual of $\cal S$, which is an $\alpha$-uniform $\rho$-regular set system of order $n\alpha/\rho$, thus it also gives an $(n\alpha/\rho,\rho,\alpha)$ FR code $\cal C'$ such that $I(\mathcal{C'})=I(\mathcal{S})^T$.

%\par A $n$-vertices and $\alpha$-regular graph $G,$ can construct a FR code $\mathcal{C}$ with parameters $(n,\alpha,\rho=2)$ for $I(\mathcal{C})=I(G).$
%\par A $n$-vertices $\rho$-uniform and $\alpha$-regular hypergraph $G,$ can construct a FR code $\mathcal{C}$ with parameters $(n,\alpha,\rho>2)$ for $I(\mathcal{C})=I(G).$
%\par A design $\mathcal{D}=(\mathcal{X},\mathcal{B})$ with point size of $\theta,$ the block size of $\alpha$ and each point is contained in exactly $\rho$ blocks. can construct a FR code $\mathcal{C}$ with parameters $(n,\alpha,\rho=2)$ for $I(\mathcal{C})=I(G).$ where $n=\frac{\rho \theta}{\alpha}.$
\section{A new model of access-balancing FR codes}\label{s:3}
% In this section we characterize the access-balance FR codes under the minimum variance of the sum of chuck popularity.

 Given a regular uniform set system, it can build an FR code as in Section~\ref{s:FRsyst}, where a node indexed by $x$ stores the content consisting of the indices of blocks containing the point $x$. Here, the indices of the blocks can be viewed as chunks of information, that is, symbols of the outer MDS codeword $(y_1,y_2,\cdots ,y_{\theta})$. The labels of chunks $y_1,y_2,\cdots ,y_{\theta}$ are directly proportional to their popularities and hence their access frequencies. It has been shown that the access frequencies of chunks experimentally obey Zipf law \cite{breslau1999web}, where the $i$-th most popular chunk has access frequency $1/i^\beta$ for some $\beta > 0$. To simplify our model, we assume that the labels of $y_1,y_2,\cdots ,y_{\theta}$ are consecutive integers in $[\theta]$, and discuss the case for Zipf law in the section of  Conclusion. The overall popularity of a node amounts to the sum of the labels of  chunks stored on the node. To make sure the access request as even as possible among all nodes, the authors in \cite{maxminsum} proposed  a chunk placement strategy, by what they referred to MaxMinSum placement. The MaxMinSum problem was stated as a point labeling problem for the dual design, which maximizes the minimum sum of block points for all blocks. They solved this problem for Steiner triple systems and their dual \cite{maxminsum}. Here, we restate this problem for  FR codes directly, i.e., a block labeling problem.

 \begin{problem}\cite{maxminsum}\label{maxminsum} Given a $\rho$-uniform $\alpha$-regular set system ${\cal S}=(V, E)$ of order $n$ with $V=\{v_1,v_2,\cdots ,v_n\}$ and $E=\{e_1,e_2,\cdots ,e_\theta\}$, where $\theta=n\alpha/\rho$, the problem of constructing a \emph{MaxMinSum} $(n,\alpha,\rho)$ FR code from $\cal S$  is equivalent to finding a labeling of blocks in $E$, i.e., a bijection $\sigma$ from $E$ to $[\theta]$,  such that the \emph{access-minsum} \[MinSum({\cal S}_\sigma):=\|I({\cal S})({\sigma(e_1)},{\sigma(e_2)},\ldots ,{\sigma(e_{\theta})})^T\|_{\mathbb{L}^{min}}\] is maximized.  Here, $\|\textbf{x}\|_{\mathbb{L}^{min}}:=\min \{x_i\}$ for $\textbf{x}=(x_1,x_2,\ldots,x_n)^T$.
\end{problem}

In \cite{maxminsum}, the authors also provided MinMaxSum model, which minimizes the \emph{access-maxsum} \[MaxSum({\cal S}_\sigma):=\|I({\cal S})({\sigma(e_1)},{\sigma(e_2)},\ldots ,{\sigma(e_{\theta})})^T\|_{\mathbb{L}^{max}},\] where  $\|\textbf{x}\|_{\mathbb{L}^{max}}:=\max \{x_i\}$ for $\textbf{x}=(x_1,x_2,\ldots,x_n)^T$. However, these two models can not be optimized simultaneously in general. For example,
let ${\cal C}$  be the FR code based on $K_4$ with nodes $\{1,2,3,4\}$ and six symbols,
there are two labelings $\sigma_1$ and $\sigma_2$ for ${\cal C}$ as follows:
\[
\begin{array}{ll} \sigma_1:\\
&\sigma_1(12)=3, \sigma_1(13)=1,\sigma_1(14)=6,\\
&\sigma_1(23)=5, \sigma_1(24)=2,\sigma_1(34)=4,\\
&MinSum({\cal S}_{\sigma_1})=\|(10,10,10,12)^T\|_{\mathbb{L}^{min}}=10,\\
&MaxSum({\cal S}_{\sigma_1})=\|(10,10,10,12)^T\|_{\mathbb{L}^{max}}=12.\\
\sigma_2:\\
&\sigma_2(12)=3, \sigma_2(13)=1,\sigma_2(14)=5,\\
&\sigma_2(23)=6, \sigma_2(24)=2,\sigma_2(34)=4,\\
&MinSum({\cal S}_{\sigma_2})=\|(9,11,11,11)^T\|_{\mathbb{L}^{min}}=9,\\
&MaxSum({\cal S}_{\sigma_2})=\|(9,11,11,11)^T\|_{\mathbb{L}^{max}}=11.\\
   \end{array}
\]
By easy computation, we see that $\sigma_1$ achieves the maximum access-minsum, while  $\sigma_2$ achieves the minimum access-maxsum. This means $\sigma_2$ is not an optimal labeling under the MaxMinSum model.  Now we look at the variance of popularities among all nodes, i.e., the value $\sum_{i=1}^\theta (p_i-\bar{p})^2$, where $p_i$ is the overall popularity of the  $i$th node, and $\bar{p}$  is the average popularity of all nodes. It is clear that $I({\cal S})({\sigma(e_1)},{\sigma(e_2)},\ldots ,{\sigma(e_{\theta})})^T=(p_1,p_2,\ldots ,p_\theta)^T$. Then
both $\sigma_1$ and $\sigma_2$ in the above example provide the same level of access-balancing FR codes when minimizing the variance of popularities. That is, the MaxMinSum model does not capture all good labelings with access-balancing property.

%
%Where $L_1$ is one optimal labeling under MaxMinSum model and $L_2$ is one optimal labeling under MinMaxSum model. Clearly, $L_1$ is not optimal under MinMaxSum model. Which means one may not make sure idlest nodes and busiest nodes are bost most efficient. Hence, MaxMinSum model is not always best model to solve access-balance problem.

Let's look at another example of the FR code based on $K_8$. By computer search, we find two optimal labelings $\sigma_1$ and $\sigma_2$   under the MaxMinSum model, but $\sigma_2$  is clearly better than $\sigma_1$  for the access-balancing property when you consider their variances.
%However, with the help of computer searching, we find there is a labeling for FR code which is optimal but is not optimal labeling under our MinVar model. Here are two labelings $F_1$ and $F_2$ as following.
\[
\begin{array}{ll}\sigma_1:\\
       &\sigma_1(12)=1, \sigma_1(13)=2,\sigma_1(14)=3,\sigma_1(15)=14, \sigma_1(16)=26,\sigma_1(17)=27,\sigma_1(18)=28,\\
        &\sigma_1(23)=4, \sigma_1(24)=10,\sigma_1(25)=17,\sigma_1(26)=21, \sigma_1(27)=23,\sigma_1(28)=25,\sigma_1(34)=24,\\
         &\sigma_1(35)=22,\sigma_1(36)=15, \sigma_1(37)=16,\sigma_1(38)=18,\sigma_1(45)=20,\sigma_1(46)=19, \sigma_1(47)=12,\\
           &\sigma_1(48)=13, \sigma_1(56)=8,\sigma_1(57)=11,\sigma_1(58)=9, \sigma_1(67)=7,\sigma_1(68)=5,\sigma_1(78)=6,\\
           &MinSum({\cal S}_{\sigma_1})=\|(101,101,101,101,101,101,102,104)^T\|_{\mathbb{L}^{min}}=101.\\
%                  &N_1=\{1,2,3,14,26,27,28\};~N_2=\{1,4,10,17,21,23,25\};\\
%                  &N_3=\{2,4,15,16,18,22,24\};~N_4=\{3,10,12,13,19,20,24\};\\
%                  &N_5=\{8,9,11,14,17,20,22\};~N_6=\{5,7,8,15,19,21,26\};\\
%                  &N_7=\{6,7,11,12,16,23,27\};~N_8=\{5,6,9,13,18,25,28\}.\\
%                  &S_1=101;~S_2=101;~S_3=101;~S_4=101;~S_5=101;~S_6=101;~S_7=102;~S_8=104.\\
%                  &Var=1.
\sigma_2:\\
       &\sigma_2(12)=1, \sigma_2(13)=2,\sigma_2(14)=3,\sigma_2(15)=14, \sigma_2(16)=26,\sigma_2(17)=27,\sigma_2(18)=28,\\
        &\sigma_2(23)=4, \sigma_2(24)=10,\sigma_2(25)=17,\sigma_2(26)=21, \sigma_2(27)=23,\sigma_2(28)=25,\sigma_2(34)=24,\\
         &\sigma_2(35)=22,\sigma_2(36)=15, \sigma_2(37)=16,\sigma_2(38)=18,\sigma_2(45)=20,\sigma_2(46)=19, \sigma_2(47)=12,\\
           &\sigma_2(48)=13, \sigma_2(56)=9,\sigma_2(57)=11,\sigma_2(58)=8, \sigma_2(67)=7,\sigma_2(68)=5,\sigma_2(78)=6,\\
           &MinSum({\cal S}_{\sigma_2})=\|(101,101,101,101,101,102,102,103)^T\|_{\mathbb{L}^{min}}=101.\\
%                  &N_1=\{1,2,3,14,26,27,28\};~N_2=\{1,4,10,17,21,23,25\};\\ &N_3=\{2,4,15,16,18,22,24\};~N_4=\{3,10,12,13,19,20,24\};\\ &N_5=\{8,9,11,14,17,20,22\};~N_6=\{5,7,\textcolor[rgb]{1.00,0.00,0.00}{9},15,19,21,26\};\\
%                  &N_7=\{6,7,11,12,16,23,27\};~N_8=\{5,6,\textcolor[rgb]{1.00,0.00,0.00}{8},13,18,25,28\}.\\
%                  &S_1=101;~S_2=101;~S_3=101;~S_4=101;~S_5=101;~S_6=102;~S_7=102;~S_8=103.\\
%                  &Var=\frac{1}{2}.
   \end{array}
\]
Since the variance of popularities among all nodes is definitely a key factor that should be considered for the access-balancing property, we introduce a  new model to capture this property in next subsection.
%Where $F_1$ and $F_2$ are both optimal under MaxMinSum model but $F_1$ is obviously not optimal under MinVar model. That is to say, maybe a labeling is optimal one such that idlest nodes become more efficient, while, in the whole system, each node compares to standard(own average popularity) node, the total difference is higher than optimal one. Therefore, it is worthwhile considering our MinVar model to cut down nodes popularities.

%  Due to the data have different access request, In the DRESS code, each symbol of the MDS codeword also has a corresponding access request frequency, labeled by one to $\theta$ according to its popularity from high to low. i.e. symbols of MDS codeword $(y_1,y_2,\cdots ,y_{\theta})$ have the access frequency $(w(y_1),w(y_2),\cdots ,w(y_{\theta}))$ ordered by $w(y_{k_1})\geq w(y_{k_2})\geq \cdots \geq w(y_{k_{\theta}})$ hence, the symbol $y_{k_j}$ has the popularity of $j.$ We can assume that symbol $y_i$ has popularity of $i.$ regarding the $i$-th popular symbol as a new symbol $z_i.$ For FR codes based on a given (hyper-)graph. The purpose is to make sure servers average access request as evenly as possible, A minimum variance FR code is the sum
%of symbols' popularity contained in each node have the minimum variance. It's easy to know that this minimum variance FR code are a kind of evenly access request FR code.

\subsection{The MinVar model}
In this subsection, we present a new chunk placement strategy, which we call \emph{MinVar} placement. The MinVar placement is to minimize the variance of popularities among all nodes of the FR code by relabeling chunks. We formalize the problem as follows.

\begin{problem}\label{minvar} Given a $\rho$-uniform $\alpha$-regular set system ${\cal S}=(V, E)$ of order $n$ with $V=\{v_1,v_2,\cdots ,v_n\}$ and $E=\{e_1,e_2,\cdots ,e_\theta\}$, where $\theta=n\alpha/\rho$, the problem of constructing a \emph{MinVar} $(n,\alpha,\rho)$ FR code from $\cal S$  is equivalent to finding a labeling of blocks in $E$, i.e., a bijection $\sigma$ from $E$ to $[\theta]$,  such that the \emph{access-variance} \[Var({\cal S}_\sigma):=\|I({\cal S})({\sigma(e_1)},{\sigma(e_2)},\ldots ,{\sigma(e_{\theta})})^T-(\bar{a},\bar{a},\ldots ,\bar{a})^T\|_{\mathbb{L}^2}\] is minimized.  Here, $\bar{a}$ equals  the average popularity $\frac{\rho\theta(\theta+1)}{2n}=\frac{\alpha(\theta+1)}{2}$ and $\|\textbf{x}\|_{\mathbb{L}^2}:=\sum_{i=1}^{n}x_i^2$ for $\textbf{x}=(x_1,x_2,\ldots,x_n)^T$.
\end{problem}

Let $p_\sigma=(p_1,p_2,\ldots,p_{n}):=I({\cal S})({\sigma(e_1)},{\sigma(e_2)},\ldots ,{\sigma(e_{\theta})})^T$. By the definition of $I({\cal S})$, the $i$-th component  $p_i$ is the total popularity of node $i$, which equals to $\sum_{j:v_i\in e_j} \sigma(e_j)$. So the value of $Var({\cal S}_\sigma)/n$ can be viewed as the variance of the distribution of the popularities of $n$ nodes, which measures how far the popularity of each node is from the mean. Since the parameter $n$ is fixed, minimizing the  value of $Var({\cal S}_\sigma)$ can yield a  kind of evenly access request FR code.

Denote \[\textsf{MinVar}({\cal S})=\min_{\sigma}Var({\cal S}_\sigma).\] Then a \emph{MinVar FR code} is  an FR code based on $\cal S$ equipped with a labeling $\sigma$ such that  $Var({\cal S}_\sigma)=\textsf{MinVar}({\cal S})$. If $\cal S$ is a regular graph, then we find that $\textsf{MinVar}({\cal S})=0$ if and only if the graph is supermagic. We review some known results about magic labeling of graphs in the next subsection.

\subsection{Magic labeling}
The concept of magic labeling in graph theory was introduced by Sedl\'{a}\v{c}ek \cite{sedlavcek1963problem} in 1963, when considering
the notion of magic squares in number theory. After that,  Stewart
studied various problems to label the edges of a graph in \cite{stewart1966magic} and \cite{stewart1967supermagic}.
Given a connected graph $G=(V,E)$, and an injective mapping $\sigma$  from $E$ into positive integers, let
\[\sigma^*(v):=\sum_{e\in E:v\in e}\sigma(e).\] If $\sigma^*(v)=\lambda$ for all $v\in V$, then we say $\sigma$ is a \emph{magic labeling} of $G$ for an index $\lambda$. Further if $\{\sigma(e):e\in E\}$ consists of consecutive positive integers, then we say $\sigma$ is  \emph{supermagic}. A graph $G$ is supermagic (magic) whenever there exists a supermagic (magic) labeling of $G$.

 There is by now a considerable
number of papers published on magic and supermagic graphs, see for example \cite{doob1974generalizations,doob1978characterizations,jeurissen1988magic,sedlavcek1976magic,shiu2002construction,ivanvco2000supermagic,sun1994labeling}.
Regular supermagic graphs were extended to \emph{degree-magic} graphs if the set of labels is $[|E|]$, and $\sigma^*(v)=\deg(v)(1+|E|)/2$ for all $v\in V$ \cite{bezegova2012characterization,ivanvco2010extension}. Note that if $G$ is a regular graph, then $G$ is supermagic if and only if it is degree-magic \cite{ivanvco2010extension}. We refer the readers to \cite{gallian2018dynamic} for comprehensive references.

If $G$ is a supermagic (or degree-magic) regular graph, then the supermagic labeling $\sigma$ satisfies that $Var({G}_\sigma)=0$ by comparing the definition of  $Var({G}_\sigma)$ and degree-magic  labeling. In other words,  a supermagic regular graph can construct a MinVar FR code with zero access-variance.

 Ivan\v{c}o \cite{ivanvco2000supermagic}
 gave a characterization of all supermagic regular complete multipartite graphs, which we summarize as follows. Note that regular complete multipartite graphs are just regular Tur\'{a}n graphs.

 \begin{theorem}\label{tura} \cite{ivanvco2000supermagic}
    The  Tur\'{a}n graph  $T(n,r)$ with $r\mid n$ and $r\geq 2$, is supermagic if and only if one of the following conditions is satisfied:
     \begin{enumerate}[(1)]
  \item $n=r$, i.e., $K_n$,  with $n=2$, or $n\geq 6$ and $n\not\equiv0\pmod 4$;
  \item $n=2r\geq 6$, i.e., $T(2r,r)$ with $r\geq 3$;
  \item $n\geq 3r$,  except when $r \equiv 0 \pmod 4$ and $\frac{n}{r}$ is odd.
\end{enumerate}
    \end{theorem}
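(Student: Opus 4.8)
The plan is to treat this as a characterization and split it into a necessity part (a parity obstruction) and a sufficiency part (explicit constructions), after fixing notation. Write $m=n/r$ for the common part size, so that $T(n,r)$ is the $d$-regular graph with $d=(r-1)m$, on $n=rm$ vertices, with $|E|=\binom r2 m^2$ edges. Summing $\sigma^*(v)=\lambda$ over all vertices and using that each edge is counted twice gives $n\lambda=2\sum_{e}\sigma(e)=|E|(|E|+1)$, so any supermagic labeling forces the magic constant $\lambda=d(|E|+1)/2$ to be a positive integer. This single integrality constraint is what I would extract first.

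For necessity I would carry out the parity analysis of $\lambda=(r-1)m\bigl(\binom r2 m^2+1\bigr)/2$. The factor $(r-1)m$ is odd exactly when $r$ is even and $m$ is odd, and in that regime a short computation shows $|E|=\binom r2 m^2$ is odd iff $r\equiv 2\pmod 4$. Hence $\lambda\notin\mathbb Z$ precisely when $r\equiv 0\pmod 4$ and $m$ is odd; for $m=1$ this is $n\equiv 0\pmod 4$. This accounts for the exclusion $n\not\equiv 0\pmod 4$ in (1) and for the single exceptional family in (3), while it is vacuous when $m$ is even (so it imposes nothing in (2)). The remaining non-supermagic graphs are the small cases $K_3$, $K_5$ and $T(4,2)=C_4$, where $\lambda$ is an integer but no labeling exists; these I would rule out directly. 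For $K_5$, for instance, one fixes the edge carrying the largest label $10$, observes that the two stars at its endpoints must each sum to $12$ while the opposite triangle sums to $21$, and then checks that the induced splittings of $\{1,\dots,9\}$ into six spoke-labels and three triangle-labels can never be completed to constant vertex sums. These finite checks are what force $n\ge 6$ in (1) and $r\ge 3$ in (2).

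The substance of the proof is the construction of supermagic labelings in every admissible case, for which I would build on magic rectangles: an $a\times b$ array containing $1,\dots,ab$ with constant row sums and constant column sums, which for $a,b\ge 2$ is known to exist iff $a\equiv b\pmod 2$ and $(a,b)\ne(2,2)$. Such a rectangle is exactly a supermagic labeling of $K_{a,b}$, contributing the same constant to every vertex of a given side. For $m\ge 3$ I would decompose $T(n,r)$ into the $\binom r2$ complete bipartite graphs $K_{m,m}$ sitting between pairs of parts, fill each with a shifted magic rectangle drawn from a distinct block of consecutive labels, and then choose which block goes on which pair so that the $r-1$ constants collected at any vertex always add up to the same value, a higher-level balancing problem on $K_r$ with prescribed edge weights; the parities tracked in the necessity step are exactly what make the rectangles exist and this balancing solvable in all non-excluded cases. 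The case $m=2$ is genuinely different and is why (2) is stated on its own: the pieces are $C_4$'s, which admit no magic rectangle, so instead I would let the four-label arrays have unequal row and column sums and orient them so that the discrepancies cancel across the $r-1$ pieces meeting at each vertex, a cancellation that becomes available only once $r\ge 3$. Finally, for the complete-graph case $m=1$ with $n\ge 6$, $n\not\equiv 0\pmod 4$, I would instead use a $1$-factorization of $K_n$ (for $n$ even) or a Hamiltonian-cycle decomposition (for $n$ odd), assigning labels in complementary pairs $\{i,\,|E|+1-i\}$ that sum to $|E|+1$ and distributing the pairs evenly over the factors so that every vertex accumulates the same total.

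The hard part is the sufficiency, and within it the bookkeeping that guarantees constant vertex sums once the $\binom r2$ pieces are assembled: one must simultaneously respect the parity conditions that let the pieces exist and solve the balancing on $K_r$, while the $m=2$ and $m=1$ subcases each need their own global or decomposition-based labelings (with a further split into $n\equiv 2\pmod 4$ and $n$ odd for the complete graph). By contrast the necessity is comparatively light, being essentially the integrality of $\lambda$ together with three finite impossibility checks. This recovers Ivan\v{c}o's characterization.
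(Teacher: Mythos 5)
A preliminary remark: the paper does not prove this statement at all --- Theorem~3.1 is quoted directly from Ivan\v{c}o \cite{ivanvco2000supermagic}, so there is no in-paper proof to compare against, and your proposal must be judged on its own merits. Your necessity half is essentially sound: summing $\sigma^*(v)=\lambda$ over vertices gives $\lambda=d(|E|+1)/2$ with $d=(r-1)m$ and $|E|=\binom{r}{2}m^2$, the parity computation correctly shows $\lambda\notin\mathbb{Z}$ exactly when $r\equiv 0\pmod 4$ and $m=n/r$ is odd, and the only remaining exclusions, $K_3$, $K_5$ and $T(4,2)=C_4$, are legitimate finite checks. This matches the exceptional cases in (1)--(3).

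The sufficiency half, however, contains a fatal gap for $m\geq 3$. You tile $T(n,r)$ by the $\binom{r}{2}$ copies of $K_{m,m}$ between pairs of parts, fill each copy with a magic square shifted by a distinct block of $m^2$ consecutive labels, and then ask to assign blocks to pairs so that every vertex collects the same total. But the copy with shift $c_{ij}$ contributes $m(m^2+1)/2+mc_{ij}$ to each of its $2m$ vertices, so your balancing condition is that $\sum_{j\neq i}c_{ij}$ be independent of $i$; since the shifts are the distinct multiples $0,m^2,2m^2,\ldots,(\binom{r}{2}-1)m^2$, this is \emph{precisely} a supermagic labeling of $K_r$ with consecutive labels. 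By the very theorem being proved (equivalently Stewart's theorem on complete graphs), no such labeling exists when $r\in\{3,4,5\}$ or $r\equiv 0\pmod 4$ --- yet these values of $r$ include cases the theorem asserts are supermagic, e.g.\ $T(9,3)=K_{3,3,3}$ or $T(16,4)$. So your construction provably cannot produce a labeling there, and, contrary to your claim, the parity conditions from the necessity step do nothing to make this balancing solvable; any correct proof must abandon contiguous blocks (for instance pairing complementary blocks so that their constants cancel across pieces) or use a different decomposition, which is what Ivan\v{c}o's actual argument does. The complete-graph case $m=1$ has the same weakness: ``distributing complementary pairs evenly over the factors'' fails as stated, since forcing the two edges of a vertex in a union of two $1$-factors to sum to $|E|+1$ propagates around each alternating cycle and forces repeated labels, and for even $n$ there are $n-1$ (an odd number of) factors, so they cannot all be paired; Stewart's constructions are substantially more intricate. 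In short, the necessity direction stands, but the sufficiency --- which you yourself identify as the substance of the theorem --- is not established.
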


 \begin{theorem}\label{kopti} \cite{silberstein2015optimal}
    The  Tur\'{a}n graph  $T(n,r)$ with $r\mid n$ and $r\geq 2$,  gives a $k$-optimal $(n,\alpha,2)$ FR code for all $k\leq \alpha$, where $\alpha=(r-1)\frac{n}{r}$, hence gives an optimal $(n,\alpha,2)$ FR code.
    \end{theorem}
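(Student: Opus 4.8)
The plan is to compute the code's worst-case coverage $M(k):=\min_{|I|=k}\big|\bigcup_{v\in I}N_v\big|$ exactly and then to show that, as a function of $k$, it satisfies the very recursion defining the upper bound $\varphi$ in~\eqref{ub2}; optimality follows by sandwiching. Write $m=n/r$ for the common part size and let $V_1,\dots,V_r$ be the parts of $T(n,r)$. In the FR code built from the graph as in Section~\ref{s:FRsyst}, the node indexed by a vertex $v$ stores exactly the edges incident to $v$, so $N_v$ is the edge-set at $v$ and, for $I\subseteq V$ with $|I|=k$, the union $\bigcup_{v\in I}N_v$ is the set of edges meeting $I$. Since two vertices span at most one edge, $\big|\bigcup_{v\in I}N_v\big|=k\alpha-e_{\mathrm{in}}(I)$, where $e_{\mathrm{in}}(I)$ counts edges inside $I$; hence $M(k)=k\alpha-\max_{|I|=k}e_{\mathrm{in}}(I)$, and $k$-optimality is the statement $M(k)=A(n,k,\alpha,2)$. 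Note that the simpler bound~\eqref{ub1} need not be tight (it already overshoots for a large balanced $K_{m,m}$ at $k=m$), so it is the recursive bound~\eqref{ub2} that must be matched.

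First I would identify the densest induced $k$-vertex subgraph. Setting $k_i=|I\cap V_i|$ gives $e_{\mathrm{in}}(I)=\binom{k}{2}-\sum_i\binom{k_i}{2}$, so maximizing $e_{\mathrm{in}}(I)$ amounts to minimizing $\sum_i\binom{k_i}{2}$ subject to $\sum_i k_i=k$ and $0\le k_i\le m$. By convexity the minimum is attained at the balanced split $k=qr+s$ with $q=\lfloor k/r\rfloor$ and $0\le s<r$, which is admissible precisely because $k\le\alpha$ forces $q\le m-1$. A short computation then yields the closed form $M(k)=\dfrac{(r-1)k(2n-k)+s(r-s)}{2r}$.

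The crucial step is to match this against the recursion in~\eqref{ub2}. Passing from a balanced $k$-set to a balanced $(k+1)$-set inserts one vertex into a part currently of size $q$; that vertex is adjacent to all but its $q$ chosen part-mates, so $\max_{|I|=k}e_{\mathrm{in}}(I)$ grows by exactly $k-q$ and therefore $M(k+1)=M(k)+\alpha-(k-q)$. It remains to verify that $k-q=\big\lceil\frac{2M(k)-k\alpha}{n-k}\big\rceil$, the increment prescribed by $\varphi$. Using the closed form one finds $\frac{2M(k)-k\alpha}{n-k}=(k-q)-\delta$ with $\delta=\frac{s(n-k-r+s)}{r(n-k)}$; the bound $q\le m-1$ (equivalently $n-k-r+s=r(m-q-1)\ge0$) gives $\delta\ge0$, while the factorization $r(n-k)-s(n-k-r+s)=(r-s)(n-k+s)>0$ gives $\delta<1$, so the ceiling is indeed $k-q$.

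Since $M(1)=\alpha=\varphi(1)$ and $M$ obeys the same recursion as $\varphi$, induction gives $M(k)=\varphi(k)$ for every $1\le k\le\alpha$. Combining the general inequality $A(n,k,\alpha,2)\le\varphi(k)$ from~\eqref{ub2} with the trivial $M(k)\le A(n,k,\alpha,2)$ (the graph is one admissible code) forces all three to coincide, so the code is $k$-optimal for all $k\le\alpha$ and hence optimal. I expect the ceiling identity of the third step to be the main obstacle: the increment formula for $M(k)$ is geometrically transparent, but pinning down $0\le\delta<1$—and in particular the admissibility bound $q\le m-1$, which is where the hypotheses $r\mid n$ and $k\le\alpha$ are genuinely used—requires the careful estimates above.
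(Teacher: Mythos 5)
Your proof is correct. I checked the three computational pillars: the closed form $M(k)=k\alpha-\max_{|I|=k}e_{\mathrm{in}}(I)$ with the balanced split (admissible since $k\le\alpha<n$ forces $q\le m-1$), the increment identity $M(k+1)=M(k)+\alpha-(k-q)$, and the ceiling identity via $\frac{2M(k)-k\alpha}{n-k}=(k-q)-\delta$ with $\delta=\frac{s(n-k-r+s)}{r(n-k)}\in[0,1)$, using $n-k-r+s=r(m-q-1)\ge 0$ and $r(n-k)-s(n-k-r+s)=(r-s)(n-k+s)>0$; all of these hold, and the sandwich $M(k)\le A(n,k,\alpha,2)\le\varphi(k)=M(k)$ finishes the argument. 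Be aware, though, that there is no internal proof to compare against: this paper imports Theorem~\ref{kopti} from \cite{silberstein2015optimal} without proof, and your route — computing the Turán code's coverage exactly and showing it obeys the very recursion defining $\varphi$ in (\ref{ub2}) — is essentially a reconstruction of the argument in that cited reference, including the correct recognition that the simpler bound (\ref{ub1}) is not tight here and that only the recursive bound can certify optimality.
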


Combining Theorems~\ref{tura} and \ref{kopti}, we immediately get the following result.

\begin{corollary}\label{optMin}
Let $r\geq 2$, $r\mid n$ and $\alpha=(r-1)\frac{n}{r}$. There exists an optimal MinVar $(n,\alpha,2)$ FR code with zero access-variance if one of the following conditions is satisfied:
     \begin{enumerate}[(1)]
  \item $n=r=2$,  or $n=r\geq 6$ and $n\not\equiv0\pmod 4$;
  \item $n=2r\geq 6$;
  \item $n\geq 3r$,  except when $r \equiv 0 \pmod 4$ and $\frac{n}{r}$ is odd.
\end{enumerate}
\end{corollary}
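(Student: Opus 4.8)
The plan is to read off the corollary as a direct consequence of the two cited theorems together with the zero-variance observation for supermagic regular graphs already made in this subsection. First I would fix the Tur\'an graph ${\cal S}=T(n,r)$ with $r\mid n$ and record that it is $\alpha$-regular with $\alpha=(r-1)\frac{n}{r}$: each vertex lies in a part of size $\frac{n}{r}$ and is joined to all of the remaining $n-\frac{n}{r}=(r-1)\frac{n}{r}$ vertices. Viewed as a $2$-uniform set system, $T(n,r)$ therefore yields an $(n,\alpha,2)$ FR code in the sense of Section~\ref{s:FRsyst}, where $\theta=n\alpha/2$ is the number of edges and the block-labeling of Problem~\ref{minvar} is exactly an edge-labeling whose vertex-sums are the node popularities $p_i=\sum_{j:v_i\in e_j}\sigma(e_j)$.

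Next I would invoke Theorem~\ref{kopti}: for every $r\geq 2$ with $r\mid n$, the FR code built from $T(n,r)$ is $k$-optimal for all $k\leq\alpha$, hence optimal, with no further divisibility constraint required. This settles the optimality half of the statement uniformly, independently of which of the three conditions is assumed.

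It then remains to produce a labeling with zero access-variance under each condition. Here I would observe that conditions (1)--(3) of the corollary are precisely conditions (1)--(3) of Theorem~\ref{tura} (for example, ``$n=r$ and $n=2$'' is the same as ``$n=r=2$''), so under any of them $T(n,r)$ is supermagic. A supermagic labeling $\sigma$ assigns the consecutive integers $1,2,\ldots,\theta$ to the edges; since $T(n,r)$ is regular, such a labeling is degree-magic, so $\sigma^{*}(v)=\alpha(1+\theta)/2$ for every vertex $v$. Taking this $\sigma$ as the block-labeling in Problem~\ref{minvar}, every node popularity satisfies $p_i=\sigma^{*}(v_i)=\alpha(1+\theta)/2$, which is exactly the prescribed average $\bar a=\frac{\alpha(\theta+1)}{2}$. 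Consequently every coordinate of $I({\cal S})(\sigma(e_1),\ldots,\sigma(e_\theta))^T-(\bar a,\ldots,\bar a)^T$ vanishes and $Var({\cal S}_\sigma)=0$.

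Finally, because $Var({\cal S}_\sigma)\geq 0$ for any labeling, the value $0$ is the smallest possible, so $\textsf{MinVar}({\cal S})=0$ is attained by this $\sigma$. Combined with the optimality from Theorem~\ref{kopti}, the resulting FR code is an optimal MinVar $(n,\alpha,2)$ FR code with zero access-variance, as claimed. I do not expect any genuine obstacle in this argument: the mathematical content sits entirely in the two quoted theorems, and the only points needing a moment's care are checking that the enumerated conditions coincide with those of Theorem~\ref{tura} and that the degree-magic index $\alpha(1+\theta)/2$ agrees with the average $\bar a$ fixed in Problem~\ref{minvar} (which it does).
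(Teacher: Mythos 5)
Your proposal is correct and follows exactly the paper's route: the paper derives this corollary by combining Theorem~\ref{tura} (supermagic characterization of regular Tur\'an graphs) with Theorem~\ref{kopti} (optimality of the resulting FR codes), using the observation made in the magic-labeling subsection that a supermagic regular graph, being degree-magic, gives $Var({\cal S}_\sigma)=0$. Your only extra care---aligning the supermagic labels with $[\theta]$ via the supermagic/degree-magic equivalence and checking $\sigma^*(v)=\bar a$---is precisely the content the paper cites from \cite{ivanvco2010extension}, so the two arguments coincide.
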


\begin{question}\label{q:35}
  When $n=r\equiv 0 \pmod 4$, i.e., complete graphs with order divisible by four, or  $r \equiv 0 \pmod 4$ and $\frac{n}{r}$ is odd, what is $\textsf{MinVar}({\cal S})$ for ${\cal S}=T(n,r)$?
\end{question}

In \cite{silberstein2015optimal}, the authors showed that graphs with large girth can produce optimal FR codes.

 \begin{theorem}\label{girth} \cite{silberstein2015optimal}
    If there exists an $\alpha$-regular graph of girth $g$, then there exists a $k$-optimal $(n,\alpha,2)$ FR code for all $k\leq g-1$, where $n$ is the number of vertices of the graph. Further if $g\geq \alpha+1$, then the  FR code is optimal.
    \end{theorem}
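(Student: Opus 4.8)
The plan is to realize the given $\alpha$-regular graph $G$ of girth $g$ as an $(n,\alpha,2)$ FR code through the construction of Section~\ref{s:FRsyst}: the node indexed by a vertex $x$ stores the set $N_x$ of edges incident with $x$, so the code has $\theta=n\alpha/2$ symbols (one per edge), each node holds $\alpha$ of them, and each edge lies in exactly two nodes. For a set $I$ of $k$ vertices, the reconstruction quantity $|\bigcup_{x\in I}N_x|$ counts the edges meeting $I$. Because $G$ is simple, an edge belongs to $N_x\cap N_y$ only when it equals $\{x,y\}$, and no edge can meet three vertices of $I$; hence inclusion--exclusion collapses to $|\bigcup_{x\in I}N_x|=k\alpha-e_G(I)$, where $e_G(I)$ is the number of edges induced by $I$. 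Consequently $M(k)=\min_{|I|=k}|\bigcup_{x\in I}N_x|=k\alpha-\max_{|I|=k}e_G(I)$, and the entire problem reduces to controlling the maximum number of induced edges over $k$-subsets.

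First I would bound $\max_{|I|=k}e_G(I)$ using the girth. If $|I|=k\le g-1$, then $G[I]$ contains no cycle: any cycle has length at least $g>k\ge|I|$, which is impossible inside $k$ vertices. Thus $G[I]$ is a forest, $e_G(I)\le k-1$, and so $M(k)\ge k\alpha-(k-1)=k(\alpha-1)+1$. For the matching value I would exhibit $k$ vertices inducing a tree: since $G$ is $\alpha$-regular with $\alpha\ge2$, every component contains a cycle and therefore has at least $g>k$ vertices, so the first $k$ vertices of a breadth-first search inside one component form a connected set $I$ of size $k$ with $e_G(I)\ge k-1$; combined with the forest bound this forces $e_G(I)=k-1$. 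Hence $\max_{|I|=k}e_G(I)=k-1$ and $M(k)=k(\alpha-1)+1$ for every $k\le g-1$.

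Next I would show this value is optimal, i.e. $A(n,k,\alpha,2)=k(\alpha-1)+1$. As $M(k)\le A(n,k,\alpha,2)$ holds for the code just built, it suffices to prove the reverse inequality from the upper bound \eqref{ub2}. Setting $\rho=2$ and inducting on $k$, I would verify $\varphi(k)=k(\alpha-1)+1$: the base case is $\varphi(1)=\alpha$, and if $\varphi(k)=k(\alpha-1)+1$ then $2\varphi(k)-k\alpha=k(\alpha-2)+2$, so the ceiling in \eqref{ub2} equals $1$ precisely when $0<k(\alpha-2)+2\le n-k$, i.e. $n\ge k(\alpha-1)+2$, yielding $\varphi(k+1)=(k+1)(\alpha-1)+1$. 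The binding instance over the range $k\le g-1$ is $n\ge(g-2)(\alpha-1)+2$, which is guaranteed by the hypothesis itself, since an $\alpha$-regular graph of girth $g$ obeys the Moore bound and the Moore bound dominates the linear quantity $(g-2)(\alpha-1)+2$. Therefore $A(n,k,\alpha,2)\le\varphi(k)=k(\alpha-1)+1=M(k)$, establishing $k$-optimality for all $k\le g-1$.

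Finally, the ``optimal'' claim is immediate: if $g\ge\alpha+1$ then $g-1\ge\alpha$, so every $k\le\alpha$ already satisfies $k\le g-1$, and $k$-optimality for all $k\le\alpha$ is by definition optimality. The main obstacle is the upper-bound step: one must confirm that \eqref{ub2} really collapses to $k(\alpha-1)+1$ throughout $k\le g-1$, and this rests entirely on the Moore-type inequality $n\ge(g-2)(\alpha-1)+2$. Checking that the Moore bound dominates this linear expression for all $\alpha\ge2$ and all $g$ (trivial for $\alpha=2$, where $G$ is a disjoint union of cycles each of length at least $g$, and requiring the exponential growth of the Moore bound for $\alpha\ge3$) is the one place where genuine care is needed; the remainder is bookkeeping with the inclusion--exclusion identity and the forest argument.
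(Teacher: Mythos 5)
The paper does not actually prove Theorem~\ref{girth}: it is imported verbatim from \cite{silberstein2015optimal}, so there is no internal proof to compare against. Your argument is a correct reconstruction of the standard proof strategy from that reference: realize the graph as an FR code, use $|\bigcup_{x\in I}N_x|=k\alpha-e_G(I)$, note that girth $g$ forces $G[I]$ to be a forest for $|I|=k\leq g-1$ (so $M(k)\geq k(\alpha-1)+1$), and show that the recursive bound \eqref{ub2} collapses to $\varphi(k)=k(\alpha-1)+1$, which sandwiches $M(k)=A(n,k,\alpha,2)$. Two remarks. First, your BFS achievability step is redundant: once $M(k)\geq k(\alpha-1)+1$ and $A(n,k,\alpha,2)\leq\varphi(k)=k(\alpha-1)+1$, the chain $k(\alpha-1)+1\leq M(k)\leq A(n,k,\alpha,2)\leq k(\alpha-1)+1$ already forces equality, so you never need to exhibit a $k$-set inducing exactly $k-1$ edges.

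Second, the one genuine soft spot is the inequality $n\geq k(\alpha-1)+2$ for all $k\leq g-2$, which you delegate to ``the Moore bound dominates the linear quantity.'' That is true, but as stated it is an unproved claim requiring its own (mildly annoying) case analysis over odd/even girth and small $g$. There is a cleaner direct fill that avoids the Moore bound entirely: for $k\leq g-2$, take a connected induced subgraph $T$ on $k$ vertices (a tree, by the girth). It sends $k\alpha-2(k-1)$ edges to the outside, and no outside vertex can receive two of them, since two such edges together with the path inside $T$ would close a cycle of length at most $(k-1)+2\leq g-1<g$. Hence $n\geq k+\bigl(k\alpha-2(k-1)\bigr)=k(\alpha-1)+2$, which is exactly the condition needed to run your induction on $\varphi$. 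With that substitution your proof is complete and self-contained.
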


\begin{question}\label{q:girth}
  For regular graphs with large girth, what is the minimum access-variance?
\end{question}

Note that Problem~\ref{minvar} can be viewed as an extension of the magic labeling problem  when a graph is not supermagic, which provides a reference to measure how far a labeling is from a supermagic labeling.

\subsection{An equivalent problem}
%\section{Minimum variance Access-balance for FR codes based on graphs and hypergraphs}
Problem~\ref{minvar} is to find  a block labeling of a set system to minimize the access-variance. Here, we present an equivalent problem, which looks for a good vertex labeling of the line graph.

For convenience, we always assume that ${\cal S}=(V, E)$ is a $\rho$-uniform $\alpha$-regular set system of order $n$ with $V=\{v_1,v_2,\cdots ,v_n\}$ and $E=\{e_1,e_2,\cdots ,e_\theta\}$, where $\theta=n\alpha/\rho$ and $I({\cal S})$ is the incidence matrix whose rows and columns are indexed by $V$ and $E$. Let ${\textbf{1}}_n$ be the all-$1$ column vector of length $n$. Then
\[I({\cal S})\cdot {\textbf{1}}_\theta=\alpha \cdot {\textbf{1}}_n, \text{ and }{\textbf{1}}_n^T \cdot I({\cal S}) =\rho\cdot {\textbf{1}}_\theta^T.\]

 Denote $M({\cal S}):={I({\cal S})}^T\cdot I({\cal S})$,  then  \[ M({\cal S})_{(i,j)}=\left\{
   \begin{array}{ll}{\rho} &{\text { if } i=j;} \\
   {|e_i\cap e_j|}         &{\text { if } i \neq j.}
   \end{array}
  \right.\]
If ${\cal S}$ is a linear set system, then $M({\cal S})=\rho I_{\theta} + A(L({\cal S}))$, where $I_{\theta}$ is an identity matrix of order $\theta,$
and $A(L({\cal S}))$ is the adjacency matrix of the line graph $L({\cal S})$. Clearly, $L({\cal S})$ is a $d$-regular graph with $\theta$ vertices, where $d=\rho(\alpha-1)$, i.e., $A(L({\cal S}))$ has $d$ ones in each row and each column. If ${\cal S}$ is not a linear set system, $A(L({\cal S}))$ is the adjacency matrix of a multi-graph $L({\cal S})$, where vertex $e_i$ is adjacent to vertex $e_j$ with exactly $|e_i\cap e_j|$ parallel edges.

\par \begin{lemma}\label{var}
Given a $\rho$-uniform  $\alpha$-regular linear set system  ${\cal S}=(V,E)$ of order $n$ and an edge labeling $\sigma$ with $\sigma(e_j)=i_j$, the access-variance \[Var({\cal S}_\sigma)=({i_1},{i_2},\cdots ,{i_\theta}) A(L({\cal S})) ({i_1},{i_2},\cdots ,{i_\theta})^T+c,\] where $c=c(\theta,\rho,\alpha)$ is a constant.
\end{lemma}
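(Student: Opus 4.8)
The plan is to expand the defining quadratic expression for $Var({\cal S}_\sigma)$ and to show that, once the linearity of $\cal S$ is used, every term except the one involving $A(L({\cal S}))$ is independent of the labeling $\sigma$. Writing $\mathbf{x}=(i_1,i_2,\ldots,i_\theta)^T$ for the label vector and recalling $\|\cdot\|_{\mathbb{L}^2}$ is the sum of squares, the definition gives
\[
Var({\cal S}_\sigma)=\big(I({\cal S})\mathbf{x}-\bar{a}\,\mathbf{1}_n\big)^T\big(I({\cal S})\mathbf{x}-\bar{a}\,\mathbf{1}_n\big)
=\mathbf{x}^T M({\cal S})\mathbf{x}-2\bar{a}\,\mathbf{1}_n^T I({\cal S})\mathbf{x}+\bar{a}^2 n,
\]
using $M({\cal S})=I({\cal S})^T I({\cal S})$ and $\mathbf{1}_n^T\mathbf{1}_n=n$. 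The first step is thus purely this algebraic expansion.

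Second, I would invoke the linearity hypothesis through the identity $M({\cal S})=\rho I_\theta+A(L({\cal S}))$ established just before the lemma. This splits the quadratic term as $\mathbf{x}^T M({\cal S})\mathbf{x}=\rho\,\mathbf{x}^T\mathbf{x}+\mathbf{x}^T A(L({\cal S}))\mathbf{x}$, where $\mathbf{x}^T A(L({\cal S}))\mathbf{x}$ is exactly $(i_1,\ldots,i_\theta)A(L({\cal S}))(i_1,\ldots,i_\theta)^T$, the expression in the statement. Everything else must be shown to be constant.

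Third, I would observe that since $\sigma$ is a \emph{bijection} from $E$ onto $[\theta]$, the multiset $\{i_1,\ldots,i_\theta\}$ equals $\{1,2,\ldots,\theta\}$ regardless of $\sigma$. Hence $\mathbf{x}^T\mathbf{x}=\sum_{j=1}^\theta i_j^2=\sum_{k=1}^\theta k^2=\tfrac{\theta(\theta+1)(2\theta+1)}{6}$ is fixed, so $\rho\,\mathbf{x}^T\mathbf{x}$ depends only on $\rho,\theta$. For the cross term I would use the row-sum identity $\mathbf{1}_n^T I({\cal S})=\rho\,\mathbf{1}_\theta^T$ recorded above, giving $\mathbf{1}_n^T I({\cal S})\mathbf{x}=\rho\sum_{j=1}^\theta i_j=\rho\,\tfrac{\theta(\theta+1)}{2}$, again constant. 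Collecting the three constant pieces yields
\[
c=\rho\,\frac{\theta(\theta+1)(2\theta+1)}{6}-\bar{a}\,\rho\,\theta(\theta+1)+\bar{a}^2 n,
\]
which, after substituting $\bar{a}=\tfrac{\alpha(\theta+1)}{2}$ and $n=\rho\theta/\alpha$, is a function of $\theta,\rho,\alpha$ alone, as claimed.

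I do not anticipate a genuine obstacle: this is the standard ``expand the variance and discard the fixed first and second moments'' computation. The one point needing care is the precise role of the \emph{linear} hypothesis, which is exactly what makes the off-diagonal entries of $M({\cal S})$ equal to the $0/1$ adjacencies of $A(L({\cal S}))$ rather than general intersection sizes; without it the same argument would instead produce the quadratic form of the weighted adjacency matrix of the multigraph $L({\cal S})$. I would also confirm that the diagonal of $M({\cal S})$ contributes only the $\sigma$-independent term $\rho\,\mathbf{x}^T\mathbf{x}$, which is immediate since each diagonal entry equals the constant $\rho$.
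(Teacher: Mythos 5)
Your proposal is correct and follows essentially the same route as the paper: expand the variance as a quadratic form in $M({\cal S})=I({\cal S})^TI({\cal S})$, split $M({\cal S})=\rho I_\theta+A(L({\cal S}))$ using linearity, and observe that the bijectivity of $\sigma$ together with the row/column-sum identities makes every remaining term a constant depending only on $\theta,\rho,\alpha$ (your value of $c$ simplifies to the paper's $\frac{\rho\theta(\theta+1)(2\theta+1)}{6}-\frac{\rho\alpha\theta(\theta+1)^2}{4}$). The only cosmetic difference is that the paper first rewrites $\bar a\,\mathbf{1}_n=I({\cal S})\cdot\frac{\bar a}{\alpha}\mathbf{1}_\theta$ so as to work with a single centered quadratic form, whereas you expand the square directly and handle the cross term via $\mathbf{1}_n^TI({\cal S})=\rho\,\mathbf{1}_\theta^T$; both computations are equivalent.
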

\begin{proof} Because $\sigma(e_j)=i_j$ is a bijection, we have  \[\sum\limits_{j=1}^{\theta}i_j=\sum\limits_{j=1}^{\theta}j=\frac{{\theta}^2+\theta}{2},~ \text{and } \sum\limits_{j=1}^{\theta}{i_j}^2=\sum\limits_{j=1}^{\theta}j^2=\frac{\theta(\theta+1)(2\theta+1)}{6}.\]
By $I({\cal S})\cdot {\textbf{1}}_\theta=\alpha \cdot {\textbf{1}}_n$, and ${{\textbf{1}}_\theta}^T\cdot A(L({\cal S}))=\rho(\alpha-1){\textbf{1}}_\theta$,  we have
  \begin{equation}\label{varmtx}
    \begin{aligned}
      Var({\cal S}_\sigma)&=\|I({\cal S})({i_1-\frac{\bar{a}}{\alpha}},{i_2-\frac{\bar{a}}{\alpha}},\cdots ,{i_\theta-\frac{\bar{a}}{\alpha}})^T\|_{\mathbb{L}^2}\\
            &=({i_1-\frac{\bar{a}}{\alpha}},{i_2-\frac{\bar{a}}{\alpha}},\cdots ,{i_\theta-\frac{\bar{a}}{\alpha}}) M({\cal S}) ({i_1-\frac{\bar{a}}{\alpha}},{i_2-\frac{\bar{a}}{\alpha}},\cdots ,{i_\theta-\frac{\bar{a}}{\alpha}})^T\\
            &\triangleq(i_1,i_2,\cdots ,i_\theta) M({\cal S}) (i_1,i_2,\cdots ,i_\theta)^T+c_1\\
            &\triangleq({i_1},{i_2},\cdots ,{i_\theta})A(L({\cal S}))({i_1},{i_2},\cdots ,{i_\theta})^T+c_1+c_2\\
            &\triangleq\sum_{e_k\cap e_l\neq \emptyset}i_ki_l+c,
    \end{aligned}
  \end{equation}
  where
  \begin{equation}\label{varmtx}
    \begin{aligned}
      c_1&=(-\frac{\bar{a}}{\alpha}\mathbf{1_\theta}^T) M({\cal S}) (-\frac{\bar{a}}{\alpha}\mathbf{1_\theta})-2(i_1,i_2,\cdots ,i_\theta) M({\cal S})(\frac{\bar{a}}{\alpha}\mathbf{1_\theta})\\
         &=\frac{\bar{a}^2}{\alpha^2} \mathbf{1_\theta}^T \rho\alpha\mathbf{1_\theta}-2\frac{\bar{a}}{\alpha}(i_1,i_2,\cdots ,i_\theta) \rho\alpha\mathbf{1_\theta} ~~~(\text{Since }M({\cal S})\mathbf{1_\theta}=\rho\alpha\mathbf{1_\theta})\\
         &=\frac{\bar{a}^2}{\alpha}\rho\theta-2\bar{a}\rho\sum_{j=1}^\theta j=\bar{a}\rho\theta\left(\frac{\bar{a}}{\alpha}-(\theta+1)\right)\\
         &=-\frac{\bar{a}\rho\theta(\theta+1)}{2},  ~~~\left(\text{Since } \frac{\bar{a}}{\alpha}=\frac{\theta+1}{2}\right)%=\frac{\rho\theta(\theta+1)}{2n\cdot \alpha}=\frac{n \alpha(\theta+1)}{2n\cdot \alpha}
    \end{aligned}
  \end{equation}
%  The first equality holds, since  $M(G)\mathbf{1}=(\rho+2\alpha-2)\mathbf{1}.$ where $\mathbf{1}$ is all a $1$ column vector. since each edge incidence with $2\alpha-2$ other edges.
%  \par The fifth equality holds, because of $\bar{a}=\frac{\rho\theta(\theta+1)}{2n}=\frac{\alpha(\theta+1)}{2}.$
%
%  Where $C_1=-\frac{\rho\alpha\theta(\theta+1)^2}{4},$
 and $c_2=\rho \sum\limits_{i=1}^\theta i^2$. Then $c=c_1+c_2=\frac{\rho\theta(\theta+1)(2\theta+1)}{6}-\frac{\rho \alpha \theta(\theta+1)^2}{4}.$
\end{proof}
\par From Lemma \ref{var}, we only need to consider the line graph of ${\cal S}$ in Problem~\ref{minvar}. So it is natural to propose the following optimization problem about graph labeling of vertices, which is equivalent to Problem~\ref{minvar} if we let $G=L(\cal S)$.
\begin{problem}\label{graphpro}
  Given a $d$-regular graph $G$ with $\theta$ vertices, $v_1,v_2,\cdots,v_{\theta}$,  find a weight function $f:V(G)\rightarrow [\theta]$, which is a bijection, such that ${\cal M}(f):=\sum_{v_i\sim v_j}f(v_i)f(v_j)$ is minimized. Denote ${\cal M}(G)=\min_f{\cal M}(f)$, and call it the \textsf{MinPS} of $G$ (which stands for the minimum product sum). Note that each edge in ${\cal M}(f)$ is computed only \emph{once} in the summation.
  %but each edge in $Var({\cal S}_\sigma)$ is computed \emph{twice} in the summation.
\end{problem}

 Now we give the motivation of why we study Problem \ref{graphpro} for union of complete or Tur\'{a}n graphs, cycles in Section~\ref{s:4}. As in Questions~\ref{q:35} and \ref{q:girth}, we are interested in the value of $\textsf{MinVar}({\cal S})$ when ${\cal S}$ is a Tur\'{a}n  graph or a graph with large girth. Alternatively, we can estimate the corresponding \textsf{MinPS} of $G=L(\cal S)$ in Problem \ref{graphpro}. For example, when ${\cal S}$ is a cycle (a graph with large girth), $L(\cal S)$ is also a cycle, whose \textsf{MinPS} can be determined as in Section~\ref{s:cycle}. However, for general Tur\'{a}n  graphs $T(n,r)$, it is not easy to determine the \textsf{MinPS} of $L(\cal S)$, even when ${\cal S}=T(n,n)$ with $n\equiv 0 \pmod 4$, i.e., the complete graph $K_{n}$. For this case, we will decompose $K_{n}$ into a Tur\'{a}n  graph $T(n,n/4)$, which is supermagic, and a union of $n/4$ disjoint $K_4$'s, whose line graph is a union of $n/4$ disjoint Tur\'{a}n  graphs $T(6,3)$. When determining the \textsf{MinPS} for the union of Tur\'{a}n graphs, we need to determine it for the union of complete  graphs as in Lemma~\ref{mtnr}. Besides, our results for these graphs can give solutions to Problem~\ref{minvar} for some FR codes, see the examples given in the beginning of each subsection of Section IV.

Before closing this section, we give a general upper bound on the \textsf{MinPS} of $G$, which will be improved in Section~\ref{s:5} for $G$ being the line graph of a complete graph.
 \begin{lemma}\label{gebd}
   Given a $d$-regular graph $G$ with ${\theta}$ vertices,  ${\cal{M}}(G)\leq \frac{d(3{\theta}+2){\theta}({\theta}+1)}{24}.$
 \end{lemma}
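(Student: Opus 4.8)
The plan is to prove the bound by the averaging (first–moment) method: since ${\cal M}(G)$ is the minimum of ${\cal M}(f)$ over all bijections $f:V(G)\to[\theta]$, it is at most the average of ${\cal M}(f)$ over a uniformly random bijection. So I would take $f$ to be a uniformly random bijection (equivalently, a uniformly random assignment of the labels $1,2,\ldots,\theta$ to the $\theta$ vertices) and compute $\mathbb{E}[{\cal M}(f)]$, then invoke $\min \le \text{average}$.

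First I would handle a single edge. For a fixed edge $\{v_i,v_j\}$, under a uniformly random bijection the pair $(f(v_i),f(v_j))$ is a uniformly random \emph{ordered} pair of \emph{distinct} labels, so
\[
\mathbb{E}\big[f(v_i)f(v_j)\big]=\frac{1}{\theta(\theta-1)}\sum_{a\neq b}ab
=\frac{1}{\theta(\theta-1)}\left(\Big(\sum_{a=1}^{\theta}a\Big)^{2}-\sum_{a=1}^{\theta}a^{2}\right).
\]
Using $\sum_{a}a=\theta(\theta+1)/2$ and $\sum_{a}a^{2}=\theta(\theta+1)(2\theta+1)/6$, the bracket simplifies: a factor $\theta(\theta+1)$ comes out, and the remaining $3\theta^{2}-\theta-2$ factors as $(3\theta+2)(\theta-1)$, so the $(\theta-1)$ cancels and one is left with $\mathbb{E}[f(v_i)f(v_j)]=(\theta+1)(3\theta+2)/12$, a value independent of the chosen edge.

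Next I would sum over edges by linearity of expectation. Since $G$ is $d$-regular on $\theta$ vertices, it has $d\theta/2$ edges, each contributing the same expected value, whence
\[
\mathbb{E}[{\cal M}(f)]=\frac{d\theta}{2}\cdot\frac{(\theta+1)(3\theta+2)}{12}=\frac{d\theta(\theta+1)(3\theta+2)}{24}.
\]
As the minimum over bijections cannot exceed this average, the claimed inequality ${\cal M}(G)\le d(3\theta+2)\theta(\theta+1)/24$ follows at once.

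There is no serious obstacle here; the only point requiring care is the single-edge expectation, where one must treat $(f(v_i),f(v_j))$ as a pair of \emph{distinct} labels rather than as independent uniform labels — this is precisely why the identity $\sum_{a\neq b}ab=(\sum_{a}a)^{2}-\sum_{a}a^{2}$ is used, and it is what supplies the cancellation of $(\theta-1)$. Everything else is linearity of expectation together with the standard power sums. If a probability-free phrasing is preferred, the same conclusion follows by summing ${\cal M}(f)$ over all $\theta!$ bijections directly: each fixed edge receives the value $ab$ for each ordered pair $(a,b)$ of distinct labels exactly $(\theta-2)!$ times, and dividing by $\theta!$ reproduces the identical average.
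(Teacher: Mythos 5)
Your proof is correct and is essentially the paper's own argument: both bound ${\cal M}(G)$ by the average of ${\cal M}(f)$ over all bijections, reduce each edge's contribution to $\sum_{a\neq b}ab=\left(\sum_a a\right)^2-\sum_a a^2$, and use the factorization $3\theta^2-\theta-2=(3\theta+2)(\theta-1)$; your probability-free remark at the end (each ordered pair of distinct labels occurring $(\theta-2)!$ times) is precisely the paper's computation.
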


\begin{proof}
  We compute the average value of ${\cal{M}}(f)$,
  \begin{equation*}
    \begin{aligned}
     \sum_{\sigma \in S_{\theta}}{\cal{M}}(f_\sigma)&=\sum_{\sigma\in S_{\theta}}\sum_{v_i\sim v_j}f_\sigma(v_i)f_\sigma(v_j)=\sum_{v_i\sim v_j}\sum_{\sigma\in S_{\theta}}f_\sigma(v_i)f_\sigma(v_j)\\
                                             &=\sum_{v_i\sim v_j}({\theta}-2)!\sum_{1\leq a\neq b\leq {\theta}}ab=\sum_{v_i\sim v_j}({\theta}-2)!\left(\left(\sum_{i=1}^{\theta}i\right)^2-\sum_{i=1}^{{\theta}}i^2\right)\\
                                             &=\frac{d(3{\theta}+2){\theta}^2({\theta}^2-1)}{24}({\theta}-2)!.
        \end{aligned}
  \end{equation*}
 Hence, ${\cal{M}}(G)\leq \frac{d(3{\theta}+2){\theta}({\theta}+1)}{24}.$
\end{proof}

\section{Determination of \textsf{MinPS} in Problem~\ref{graphpro}}\label{s:4}

In this section, we focus on solving Problem~\ref{graphpro} for several classes of graphs, such as disjoint union of complete graphs, Tur\'{a}n graphs and cycles.  For convenience, let $[a,b]$ denote the set of integers $\{a,a+1,\ldots,b\}$ for any integers $a\leq b$, and $[1,b]$ is abbreviated to $[b]$.

Before solving Problem~\ref{graphpro}, we illustrate the relation between the MinVar problem and the MinPS problem for  linear set systems and their dual  in  Figure \ref{Fig:relationship}.

\begin{figure}[h]
\center{
 \begin{tikzpicture}
 \draw[rounded corners] (-5,1) rectangle (-1,3);
 \draw (-3,2.5) node []{MinVar problem for };
 \draw (-3,2) node []{a set system ${\cal S}$};
 \draw (-3,1.5) node []{in Problem \ref{minvar}};

 \draw[rounded corners] (2,1) rectangle (6,3);
  \draw (4,2.5) node []{MinVar problem for};
 \draw (4,2) node []{the dual system ${\cal S}^*$};
 \draw (4,1.5) node []{in Problem \ref{minvar}};

 \draw[rounded corners] (2,-3.5) rectangle (6,-1.5);
 \draw (4,-2) node []{MinPS problem for};
 \draw (4,-2.5) node []{$L({\cal S}^*)=\partial_2 {\cal S}$};
 \draw (4,-3) node []{in Problem \ref{graphpro}};

 \draw[rounded corners] (-5,-3.5) rectangle (-1,-1.5);
 \draw (-3,-2) node []{MinPS problem for};
 \draw (-3,-2.5) node []{$L({\cal S})$};
 \draw (-3,-3) node []{in Problem \ref{graphpro}};

 \draw[->] (-0.8,-0.3)--(1.8,-0.3);
 \draw[<-] (-0.8,-0.4)--(1.8,-0.4);
 \draw (-3,0.8)--(-3,-1.3);
 \draw (4,0.8)--(4,-1.3);

 \draw (0.35,0) node []{Dual};
 \draw (-4,-0.4) node []{Equivalent};
 \draw (5,-0.4) node []{Equivalent};

 \end{tikzpicture}
 \caption{Relationship of the MinVar  and MinPS problems for linear set systems and their dual.  Any two blocks in ${\cal S^*}$ are incident if and only if the corresponding points in ${\cal S}$ are in the same block, thereby $L({\cal S}^*)=\partial_2 {\cal S}$.}\label{Fig:relationship}}
\end{figure}
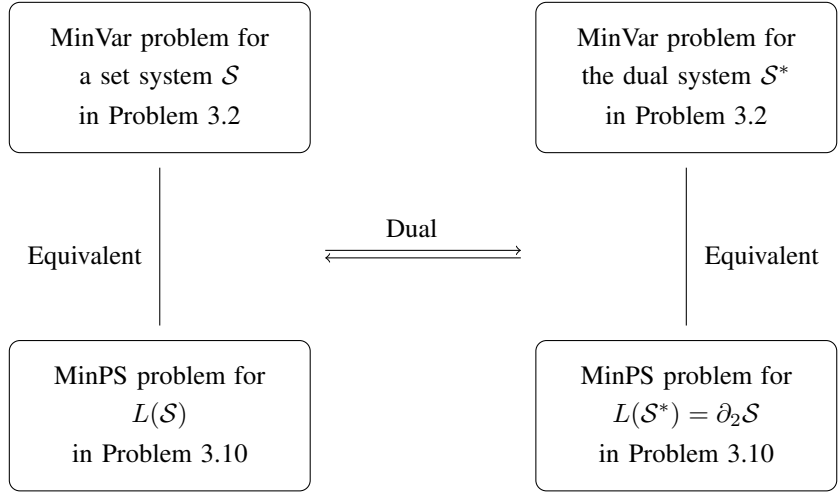

\subsection{Union of complete graphs}
]Let $mK_r$ be a graph that is a disjoint union of $m$ copies of $K_r$. In this subsection, we consider the \textsf{MinPS} of $mK_r$, whose value will be used in Lemma~\ref{mtnr} for a union of Tur\'{a}n graphs $mT(n,r)$.

When $m=1$, it is just a complete graph $K_r$. It is easy to see that the value ${\cal M}(f)$ is a constant for all labelings $f$, and ${\cal M}(K_r)=\sum_{1\leq i< j\leq r}ij$. FR codes $\cal S$ with $L({\cal S})$ being complete graphs exist, for example when $\cal S$ is a $2$-$(q^2+q+1,q+1,1)$ design (i.e., symmetric design \cite{colbourn2010crc}), in which every pair of blocks intersect in exactly one point.

 %Let the set system ${\cal S}$ be the disjoint union of $m$ copies of dual 2-$(n,k,1)$ designs, especially, disjoint union of $m$ copies ofs, then the line graph $L({\cal S})=mK_n$.
%is the line graph of a symmetric $2$-$(n,k,1)$ design with $r={n \choose k}/{k\choose 2}$ for some $k$.  When $G$ is a complete graph $K_\theta$, then $Aut(G)=S_\theta$. By Remark~\ref{rem1}, ${\cal M}(f_\sigma)$ is a constant for all $\sigma \in S_n.$
%Let $S_{\theta}$ denote the symmetric group of all permutations on $[\theta]$. Then we can view a permutation $\sigma \in S_{\theta}$ as a weight function $f_\sigma$,  where $f_\sigma(v_i)=\sigma(i)$. %Here are some remarks of Problem~\ref{graphpro}.
%\begin{Remark}\label{rem1}
%Note that the automorphism group $Aut(G)$ of $G$ is a subgroup of $S_{\theta}$, which preserves the incidence relation of vertices and edges of $G$.
%  If $\sigma,\tau \in gAut(G)$ for some permutation $g\in S_{\theta}$, %that is $\sigma,\tau $ are in the  same coset of  $Aut(G)$,
%  then we claim that ${\cal M}(f_\sigma)={\cal M}(f_\tau).$  In fact, ${\cal M}(f_\sigma)=\sum_{v_i\sim v_j}\sigma(i) \sigma(j)=\sum_{v_i\sim v_j}\tau\pi^{-1}(i) \tau\pi^{-1}(j)$ for some $\pi \in Aut(G)$. Since $v_i\sim v_j$ if and only if $v_{\pi(i)}\sim v_{\pi(j)}$, we have \[{\cal M}(f_\sigma)=\sum_{v_{\pi(i)}\sim v_{\pi(j)}}\tau(i) \tau(j)=\sum_{v_i\sim v_j}\tau(i) \tau(j)={\cal M}(f_\tau).\]
%\end{Remark}

  \begin{lemma}\label{mk}
 For $m\geq 2$, the \textsf{MinPS} value ${\cal M}(mK_r)$ can be achieved by the labeling satisfying that the label sums for each copy are as equal as possible.
 \end{lemma}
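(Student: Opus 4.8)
The plan is to reduce the problem to a statement about the multiset of label sums and then to realize the optimal such multiset. \emph{First} I would exploit that each component is a \emph{complete} graph, so every pair of labels inside a copy contributes to ${\cal M}(f)$; hence ${\cal M}(f)$ depends only on which $r$ labels land in each copy, not on their internal arrangement. Writing $S_1,\dots,S_m$ for the label sets of the $m$ copies (an arbitrary partition of $[mr]$ into $r$-subsets) and $T_c=\sum_{a\in S_c}a$, the identity $\sum_{\{a,b\}\subseteq S_c}ab=\tfrac12\big(T_c^2-\sum_{a\in S_c}a^2\big)$ yields
\[
{\cal M}(f)=\frac12\sum_{c=1}^{m}T_c^2-\frac12\sum_{i=1}^{mr}i^2 .
\]
The second term is independent of the labeling, so minimizing ${\cal M}(f)$ is equivalent to minimizing $\sum_{c}T_c^2$ over all partitions.

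\emph{Second}, the constraint $\sum_{c}T_c=\tfrac12 mr(mr+1)$ is fixed, so by a unit-transfer argument on the integer vector $(T_1,\dots,T_m)$ with fixed sum (decreasing a larger coordinate by one and increasing a smaller coordinate by one lowers $\sum_c T_c^2$ whenever two coordinates differ by at least two), the quantity $\sum_c T_c^2$ is minimized precisely when the $T_c$ take at most two consecutive integer values, i.e.\ are \emph{as equal as possible}. This shows that the abstract minimizing sum-vector is the maximally balanced one; what remains is to guarantee that this balanced sum-vector is actually realizable by some partition of $[mr]$ into $r$-subsets.

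The realizability step is where I expect the real work. For \emph{even} $r$ I would use a boustrophedon (serpentine) filling: write $1,2,\dots,mr$ into an $r\times m$ array, filling row $i$ left-to-right when $i$ is even and right-to-left when $i$ is odd, and take the $m$ columns as the copies; pairing consecutive rows then makes every column sum equal to $\tfrac{r(mr+1)}{2}$, so all $T_c$ coincide. For \emph{odd} $r$ the leftover row spoils exact balance—a single row of consecutive integers cannot be split one-per-copy with spread smaller than $m-1$—so I would peel off a balanced even block and reduce to the base case $r=3$, i.e.\ partitioning a shifted copy of $[3m]$ into triples whose sums differ by at most one (all equal when $m$ is odd, and split into the two adjacent half-integer values when $m$ is even); such balanced triple decompositions are produced by a standard modular construction. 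Splicing the serpentine part on the first $r-3$ rows with the balanced triples on the last three rows yields copy-sums differing by at most one.

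\emph{Finally}, combining the three steps—reduction, convexity, and realizability—shows the minimum of $\sum_c T_c^2$, and hence of ${\cal M}(f)$, is attained by a labeling with label sums as equal as possible, which is the assertion. I would also record the resulting balanced values of the $T_c$ explicitly, since these feed directly into the evaluation of ${\cal M}(mK_r)$ invoked in Lemma~\ref{mtnr}. The main obstacle throughout is the odd-$r$ realizability: the clean serpentine argument fails, and one must supply a balanced triple decomposition to close the parity gap.
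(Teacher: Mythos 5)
Your proposal is correct and follows essentially the same route as the paper: reduce ${\cal M}(f)$ to $\tfrac12\sum_c T_c^2$ minus a labeling-independent constant, argue by convexity that the (near-)balanced sum-vector minimizes this, and then realize that vector by explicit constructions --- paired blocks giving equal column sums for even $r$, and peeling off an even block to reduce to balanced triple decompositions for odd $r$ --- exactly as in the paper's proof and the remark following it. Your unit-transfer argument on the integer vector $(T_1,\dots,T_m)$ even treats the $r$ odd, $m$ even case (where the average sum is non-integral) slightly more rigorously than the paper's averaging inequality, but this is a refinement of the same idea, not a different approach.
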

 \begin{proof}
  Given a vertex labeling $f$ of $mK_r$, let $V_i$ be the set of labels of the $i$th copy of $K_r$, then $|V_i|=r$ and $\cup_{i=1}^{m}V_i=[mr]$.
  Then
  \begin{equation*}
    \begin{aligned}
    {\cal M}(f)&=\sum_{i=1}^{m} \sum_{u<v\in V_i}uv=\sum_{i=1}^{m} \frac{1}{2} \left(\left(\sum_{u\in V_i}u\right)^2-\sum_{u\in V_i}u^2\right)\\
    &=\frac{1}{2} \sum_{i=1}^{m}\left(\sum_{u\in V_i}u\right)^2-\frac{1}{2} \sum_{i=1}^{mr}i^2\\
    &\geq \frac{m}{2}\left(\frac{\sum_{i=1}^{mr}i}{m}\right)^2-\frac{mr(mr+1)(2mr+1)}{12}\\
    &=\frac{mr^2(mr+1)^2}{8}-\frac{mr(mr+1)(2mr+1)}{12}.\\
        \end{aligned}
  \end{equation*}
  The inequality holds when $\sum_{u\in V_i}u$ is the same for all $i\in [m]$.  This can be achieved except when $r$ is odd and $m$ is even (in this case, $\frac{\sum_{i=1}^{mr}i}{m}$ is not an integer), for which
 the minimum value  can be achieved if $\sum_{u\in V_i}u$ are almost the same for all $i\in [m]$, i.e., pairwise difference is at most one.
 \end{proof}
 \begin{Remark}\label{Mmkr}
    The value of ${\cal M}(mK_r)$ in Lemma~\ref{mk} can be achieved by the following constructions.
    \begin{enumerate}[(1)]
      \item When $r$ is  even, for each $i\in [m]$, let
      \[V_i=\left[\frac{(i-1)r}{2}+1,\frac{ir}{2}\right]\bigcup\left[mr+1-\frac{ir}{2},mr-\frac{(i-1)r}{2}\right]:=V_i^{(r)}.\]
Then $\sum_{u\in V_i}u=\frac{(rm+1)r}{2}$ for all $i$, and ${\cal M}(mK_r)=\frac{mr^2(mr+1)^2}{8}-\frac{mr(mr+1)(2mr+1)}{12}$ .
      \item When $m$ and $r>1$ are both odd, let
       \[V_i=V_i^{(r-3)}\bigcup \left\{(r-3)m+i,\frac{(2r-3)m-1}{2}+i,rm+2-2i\right\},i\in\left [\frac{m+1}{2}\right]\]
       and
       \[V_i=V_i^{(r-3)}\bigcup \left\{(r-3)m+i,\frac{(2r-5)m-1}{2}+i,(r+1)m+2-2i\right\},i\in \left[\frac{m+3}{2}, m\right].\]
Then   $\sum_{u\in V_i}u=\frac{(rm+1)r}{2}$ for all $i$, and ${\cal M}(mK_r)=\frac{mr^2(mr+1)^2}{8}-\frac{mr(mr+1)(2mr+1)}{12}.$%\frac{((r-3)m+1)(r-3)}{2}+\frac{(6r-9)m+3}{2}=
        \item When $r>1$ is  odd and $m$ is even, let
       \[V_i=V_i^{(r-3)}\bigcup \left\{(r-3)m+i,\frac{(2r-3)m}{2}+i,rm+2-2i\right\}, i\in \left[\frac{m}{2}\right]\]
and
       \[V_i=V_i^{(r-3)}\bigcup \left\{(r-3)m+i,\frac{(2r-5)m}{2}+i,(r+1)m+1-2i\right\}, i\in\left[\frac{m}{2}+1,m\right].\]
Then    \begin{equation*}
\sum_{u\in V_i}u=\begin{dcases}
\frac{(rm+1)r+1}{2}, &i\in \left[\frac{m}{2}\right];\\
\frac{(rm+1)r-1}{2}, &i\in\left[\frac{m}{2}+1,m\right].
\end{dcases}
\end{equation*}
Hence, ${\cal M}(mK_r)=\frac{mr^2(mr+1)^2+m}{8}-\frac{mr(mr+1)(2mr+1)}{12}$.
%$\sum_{u\in V_i}u=\frac{((r-3)m+1)(r-3)}{2}+\frac{(6r-9)m+4}{2}=\frac{(rm+1)r+1}{2}$ for $1\leq i\leq \frac{m}{2}$,
%\par   $\sum_{u\in V_i}u=\frac{((r-3)m+1)(r-3)}{2}+\frac{(6r-9)m+2}{2}=\frac{(rm+1)r-1}{2}$ for $\frac{m}{2}+1\leq i\leq m$.
    \end{enumerate}
 \end{Remark}

\subsection{Union of Tur\'{a}n graphs}

Next, we determine the \textsf{MinPS} of  a Tur\'{a}n graph  $T(n,r)$ with $r\mid n$.  FR codes  based on the dual transversal designs TD$(r,n/r)$ \cite{colbourn2010crc} have the line graph $T(n,r)$ for any $r\mid n$. We first provide  a simple but very useful remark below, which will be frequently used in our proofs.

\begin{Remark}\label{rem2}  Suppose that $f$ is a weight function of $G$ achieving the \textsf{MinPS}. For every vertex  $v$, denote  $f(N(v)):=\sum_{ u\in N(v)}f(u)$.  If $v_i$ and $v_j$ are not adjacent and $f(v_i)< f(v_j)$, then we claim that $f(N(v_i))\geq f(N(v_j))$. In fact, if $f(N(v_i))<f(N(v_j))$, then we switch the weight value between $v_i$ and $v_j$, and obtain a new weight function $f'$, for which  $f'(N(v_i))=f(N(v_i)),f'(N(v_j))=f(N(v_j))$ and $f'(v_i)=f(v_j),f'(v_j)=f(v_i).$ But
\begin{equation*}
    \begin{aligned}
    {\cal M}(f')-{\cal M}(f)&=f'(v_i)f'(N(v_i))+f'(v_j)f'(N(v_j))-f(v_i)f(N(v_i))-f(v_j)f(N(v_j))\\
               &=f(v_i)(f(N(v_j))-f(N(v_i)))+f(v_j)(f(N(v_i))-f(N(v_j)))\\
               &=(f(N(v_i))-f(N(v_j)))(f(v_j)-f(v_i))<0,
    \end{aligned}
  \end{equation*}
which is a contradiction to the fact that  ${\cal M}(f)={\cal M}(G).$
\end{Remark}

\begin{lemma}\label{multip} Let $r\mid n$ and $r\geq 2$. Then the \textsf{MinPS} of $T(n,r)$
\[{\cal M}(T(n,r))=\sum_{1\leq j < j'\leq r}\left(\sum_{k=(j-1)n/r+1}^{jn/r}k\right) \left( \sum_{l=(j'-1)n/r+1}^{j'n/r}l\right).\]
\end{lemma}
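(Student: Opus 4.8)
The plan is to turn the vertex-labeling minimization of Problem~\ref{graphpro} into an optimization over set partitions of $[n]$, and then to identify the optimal partition by an exchange argument. Write $s=n/r$. Recall that $T(n,r)$ is the complete $r$-partite graph whose vertex set is split into parts $P_1,\dots,P_r$ of size $s$, with two vertices adjacent exactly when they lie in different parts. A bijection $f\colon V(T(n,r))\to[n]$ thus partitions $[n]$ into the label sets $L_j=\{f(v):v\in P_j\}$, each of size $s$; putting $S_j=\sum_{x\in L_j}x$, every cross-part pair is an edge and no within-part pair is, so
\[{\cal M}(f)=\sum_{1\le j<j'\le r}S_jS_{j'}.\]
In particular ${\cal M}(f)$ depends only on the unordered partition $\{L_1,\dots,L_r\}$ (indeed only on the multiset of block-sums), not on how labels are arranged inside a given part; this also explains why Remark~\ref{rem2} is essentially vacuous for the within-part (non-adjacent) swaps of $T(n,r)$, so that a cross-part swap is the relevant move here.

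First I would reduce to a sum-of-squares problem. Since $\sum_{j=1}^rS_j=\sum_{i=1}^n i=\frac{n(n+1)}{2}$ is independent of $f$, the identity $\bigl(\sum_jS_j\bigr)^2=\sum_jS_j^2+2\sum_{j<j'}S_jS_{j'}$ gives
\[{\cal M}(f)=\frac12\left[\left(\frac{n(n+1)}{2}\right)^2-\sum_{j=1}^rS_j^2\right].\]
Hence minimizing ${\cal M}(f)$ is equivalent to \emph{maximizing} $\sum_{j=1}^rS_j^2$ over all partitions of $[n]$ into $r$ blocks of size $s$.

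Next I would pin down the maximizing partition by an exchange argument analogous in spirit to Remark~\ref{rem2} but performed across parts. Take a partition attaining the maximum and suppose two blocks $a,b$ satisfy $S_a\le S_b$ yet contain elements $x\in L_a$, $y\in L_b$ with $x>y$. Exchanging $x$ and $y$ replaces $(S_a,S_b)$ by $(S_a-\delta,S_b+\delta)$ with $\delta=x-y>0$ and changes $\sum_jS_j^2$ by $2\delta(S_b-S_a)+2\delta^2>0$, contradicting optimality. Therefore $\max L_a<\min L_b$ whenever $S_a\le S_b$ (strict since the blocks are disjoint); in particular no two block-sums are equal, and this separation is a strict total order on the blocks by their value ranges. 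Combined with the fact that the blocks partition $[n]$ into $r$ intervals of equal size $s$, they must be the consecutive blocks $L_j=[(j-1)s+1,js]$, $j\in[r]$, after reindexing.

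Finally, substituting $S_j=\sum_{k=(j-1)n/r+1}^{jn/r}k$ into ${\cal M}(f)=\sum_{j<j'}S_jS_{j'}$ yields exactly the claimed value of ${\cal M}(T(n,r))$. The reduction and the closing substitution are routine algebra; the crux, and the step I expect to need the most care, is the exchange argument, specifically deducing from the pairwise separation $\max L_a<\min L_b$ that the blocks are globally forced to be the consecutive intervals.
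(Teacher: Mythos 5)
Your proof is correct and takes essentially the same route as the paper's: both hinge on the identical cross-part exchange argument (swap an inverted pair of labels between two parts ordered by part-sum, obtain a strict improvement, and conclude that the optimal parts are the consecutive intervals of labels). Your detour through maximizing $\sum_{j=1}^r S_j^2$ is only a reformulation — since $\sum_j S_j$ is fixed, it is equivalent to the paper's direct computation of the change in $\sum_{j<j'} S_j S_{j'}$, where only the single cross term between the two affected parts moves.
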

\begin{proof} Let $m=n/r\geq 2$ (when $m=1$, the graph is a complete graph, which is trivial). Suppose that $G=(V,E)$ is a Tur\'{a}n graph  $T(n,r)$, with $V=V_1\cup V_2\cup \cdots \cup V_r$ and $|V_i|=m$.  Given a weight function $f$, denote $f(V_i)=\sum_{v\in V_i}f(v)$ and $f(V)=\sum_{v\in V}f(v)$, then ${\cal M}(f)=\sum_{1\leq i < j\leq r}f(V_i)f(V_j)$. Without loss of generality, we assume that $f(V_1)\leq f(V_2)\leq \cdots \leq f(V_r)$.

 We claim that the weight function which minimizes ${\cal M}(f)$ has the property that $f(V_i)=\sum_{j=(i-1)m+1}^{im}j$, $i\in [r]$, that is, the labels of vertices in $V_i$ are $(i-1)m+1,(i-1)m+2,\ldots,im$.

 We prove it by contradiction.  Assume that for some $i<i'$,  there exists a label $x\in V_i$ and a label $y\in V_{i'}$ satisfying $x>y$. By switching the labels $x$ and $y$, we get a new weight function $f'$, for which
   \begin{equation}
    \begin{aligned}
       {\cal M}(f')-{\cal M}(f)&=\sum_{1\leq i < j\leq r}f'(V_i)f'(V_j)-\sum_{1\leq i < j\leq r}f(V_i)f(V_j)\\
                               &=(f(V_i)-x+y)(f(V_{i'})-y+x)-f(V_i)f(V_{i'})\\
                               &=(f(V_i)-f(V_{i'}))(x-y)-(x-y)^2.
    \end{aligned}
  \end{equation}
  Since $f(V_{i'})\geq f(V_i)$ and $x>y$, we have ${\cal M}(f')<{\cal M}(f)$, a contradiction.
\end{proof}

Now we consider the graph $mT(n,r)$ with $r\mid n$, which is a disjoint union of copies of $T(n,r)$. The \textsf{MinPS} of $mT(n,r)$ will be used in the estimation of $\textsf{MinVar}(K_{4r})$ in Lemma~\ref{k4r}.

\begin{lemma}\label{mtnr}
The \textsf{MinPS} of $mT(n,r)$  with $r\mid n$ can be determined by the \textsf{MinPS} of $mK_r$.
\end{lemma}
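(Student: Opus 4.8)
The statement to prove is Lemma~\ref{mtnr}: the \textsf{MinPS} of $mT(n,r)$ can be determined by the \textsf{MinPS} of $mK_r$. The key structural observation is that a Tur\'{a}n graph $T(n,r)$ (with $r\mid n$, each part of size $m'=n/r$) has vertex set partitioned into $r$ parts, and edges join vertices in different parts. For a single copy, Lemma~\ref{multip} already tells us that an optimal labeling groups consecutive integers into the parts, so the contribution of each part is governed only by its \emph{total} label-sum; the internal assignment within a part is irrelevant to ${\cal M}(f)$ because there are no edges inside a part. This strongly suggests that the right variable to track is the vector of part-sums, not the individual labels.

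**Approach.** The plan is to reduce the $mT(n,r)$ problem to the $mK_r$ problem by collapsing each part of each Tur\'{a}n graph to a single ``super-vertex'' weighted by its part-sum. Concretely, given any labeling $f$ of $mT(n,r)$ with labels from $[mn]$, each of the $m$ copies contributes $\sum_{1\le i<j\le r} f(V_i)f(V_j)$ where $V_1,\dots,V_r$ are its parts; summing over copies, ${\cal M}(f)=\sum_{\text{copies}}\sum_{1\le i<j\le r} f(V_i^{(\text{copy})})f(V_j^{(\text{copy})})$. First I would argue, using Lemma~\ref{multip} applied within each copy together with a global exchange argument in the spirit of Remark~\ref{rem2}, that an optimal $f$ must assign \emph{blocks of $m'=n/r$ consecutive integers} to each part, so that the $mr$ part-sums are exactly the set of ``consecutive-block sums'' $\{\sum_{k\in B} k : B \text{ a block}\}$ partitioning $[mn]$ into $mr$ intervals of length $m'$. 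The problem then becomes: distribute these $mr$ part-sums among $m$ copies, each copy receiving exactly $r$ of them, so as to minimize the total product-sum $\sum_{\text{copies}}\sum_{i<j} s_i^{(\text{copy})} s_j^{(\text{copy})}$. This is precisely an instance of the $mK_r$ problem, but with the ground set being the $mr$ block-sums $\{1,2,\dots\}$ rescaled, rather than $[mr]$ itself.

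**Key steps in order.** (1) Show any \textsf{MinPS}-optimal labeling of $mT(n,r)$ assigns consecutive-integer blocks to parts: within a copy this is Lemma~\ref{multip}, and across copies one uses the exchange inequality of Remark~\ref{rem2} (if a larger label in one part could be swapped with a smaller label so that both part-sums move toward equality, ${\cal M}$ strictly decreases unless the labels already form consecutive blocks). (2) Having fixed the part-sums to be the $mr$ interval-sums $w_1<w_2<\dots<w_{mr}$, observe that the residual optimization—partition $\{w_1,\dots,w_{mr}\}$ into $m$ groups of size $r$ minimizing the product-sum within groups—has exactly the same algebraic form as ${\cal M}(mK_r)$, since the product-sum within a group of $r$ weights equals $\tfrac12\big((\sum w)^2-\sum w^2\big)$ and $\sum_i w_i$, $\sum_i w_i^2$ are fixed. (3) Conclude that the optimal grouping is the one making the group-sums as equal as possible, exactly as in Lemma~\ref{mk}, and hence ${\cal M}(mT(n,r))$ is computed by the same ``balanced partition'' recipe that determines ${\cal M}(mK_r)$, with the weights $w_i$ playing the role of the integers $i\in[mr]$.

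**Main obstacle.** The step I expect to be most delicate is (1): proving that an optimal labeling really does assign consecutive blocks to parts \emph{across} copies, not merely within a single copy. Lemma~\ref{multip} handles one copy with its parts already isolated, but here we must simultaneously decide which integers land in which copy \emph{and} which part. The exchange argument of Remark~\ref{rem2} gives local optimality conditions, yet turning these into the clean statement ``the $mr$ part-sums are exactly the interval-sums'' requires care: one must show that no optimal configuration can have a part straddling a ``gap'' in the consecutive structure. I would handle this by a two-level exchange—first equalize within copies via Lemma~\ref{multip}, then apply Remark~\ref{rem2} to a global sorting of parts by their sums—and verify that the two reductions are mutually consistent, so that the reduction to $mK_r$ is exact and not merely a bound. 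The remaining arithmetic identities (expressing the final value via $\sum w_i$ and $\sum w_i^2$) are routine once the combinatorial reduction is in place.
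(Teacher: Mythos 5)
Your proposal follows essentially the same route as the paper's proof: the paper likewise uses the within-copy sorting from the proof of Lemma~\ref{multip}, a cross-copy exchange argument (the same swap computation that underlies Remark~\ref{rem2}) to reduce to labelings where every part is an interval $U_t=[l(t-1)+1,lt]$, and then an algebraic expansion showing ${\cal M}(f)=\mathrm{const}+l^4{\cal M}(\bar f)$ with $\bar f$ the induced labeling of $mK_r$. The only cosmetic difference is that you phrase the final reduction via the identity $\sum_{u<v}uv=\tfrac12\bigl((\sum u)^2-\sum u^2\bigr)$ and balanced group sums, whereas the paper expands the product of the affine part-sums directly; both give the same exact equivalence with the $mK_r$ problem.
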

\begin{proof}
Let $l=n/r\geq 2$ and $V_{ij}$ be the set of labels of vertices from the $i$th part of the $j$th copy of $T(n,r)$, $i\in [r]$ and $j\in [m]$. For any labeling $f$, we have $|V_{ij}|=l$ and $\bigcup_{i\in [r],j\in[m]}V_{ij}=[mn]$. Suppose that $f$ minimizes ${\cal M}(f)$. Let $V_j=\cup_{i\in [r]}V_{ij}$. By the proof of Lemma~\ref{multip}, we know that for each $j$, $V_{1j}$ is the set of the smallest $l$ integers from $V_j$,  $V_{2j}$ is the set of the smallest $l$ integers from $V_j\setminus V_{1j}$, and so on. For convenience, we denote this property by $\cal P$.

We  claim that for any two different sets $V_{ij}$ and $V_{i'j'}$, all integers in $V_{ij}$ are either smaller than each integer in $V_{i'j'}$,  or greater than each integer in $V_{i'j'}$. That is to say, each $V_{ij}$ must be exactly a set  $U_t=[l(t-1)+1,lt]$ for some $t\in[mr]$. We prove the claim by contradiction. Suppose that there exist two sets $V_{ij}$ and $V_{i'j'}$, and integers $x,y\in V_{ij}$ and $z,w\in V_{i'j'}$, such that $x>z$ and $y<w$. By property $\cal P$, we have $j\neq j'$. Let $F_j=\sum_{s\neq i} V_{sj}$ and $F_{j'}=\sum_{a\neq i'}V_{aj'}$. Suppose that $F_j\geq F_{j'}$. Let $f'$ be a new labeling by switching $x$ and $z$. Then
\begin{equation*}
    \begin{aligned}
     {\cal M}(f')-{\cal M}(f)&=(zF_j+xF_{j'})-(xF_j+zF_{j'})\\
               &=(z-x)(F_j-F_{j'})\leq 0.
    \end{aligned}
  \end{equation*}
Thus, by switching $x$ and $z$, the value of ${\cal M}(f)$ does not increase. Continuing this operation, we can have all integers in $V_{ij}$ are smaller than those in $V_{i'j'}$. Repeating this step for any such pair $V_{ij}$ and $V_{i'j'}$, eventually each $V_{ij}$ becomes some $U_t$. %In fact, $t=(j-1)r+i$.

By the above claim, we can assume that  each $V_{ij}$ is a set $U_t$ for some $t$. Now we compute ${\cal M}(f)$. Assume that $V_{ij}=U_t$ and $V_{i'j}=U_{t'}$ in the following equation.
\begin{equation*}
    \begin{aligned}
 {\cal M}(f)&=\sum_{j\in[m]}\sum_{1\leq i < i'\leq r}\left (\sum_{u\in V_{ij}}u\right)\left(\sum_{u\in V_{i'j}}u\right)\\
 &=\sum_{j\in[m]}\sum_{1\leq i < i'\leq r} \frac{l(2lt-l+1)}{2}\times \frac{l(2lt'-l+1)}{2}\\
 &={r\choose 2}\frac{m l^2(1-l)^2}{4}+ l^3 (1-l) \sum_{j\in[m]}\sum_{1\leq i < i'\leq r}\frac{t+t'}{2}+l^4 \sum_{j\in[m]}\sum_{1\leq i < i'\leq r}tt'\\
 &={r\choose 2}\frac{m l^2(1-l)^2}{4}+ \frac{l^3-l^4}{4} rm(rm+1)(r-1)+l^4{\cal M}(\bar{f}),
    \end{aligned}
  \end{equation*}
  where  $\bar{f}$ is the induced vertex labeling for $mK_r$. By Remark \ref{Mmkr}, we can find the optimal labeling $\bar{f}$ with ${\cal M}(\bar{f})={\cal M}(mK_r)$, from which we can deduce the optimal labeling $f$ for $mT(n,r)$.
\end{proof}

\subsection{Cycles}\label{s:cycle}

Large cycles are graphs with large girth, whose line graphs are also themselves.
Now we consider the \textsf{MinPS} of  the cycle $C_\theta$.  Denote ${\cal M}_\theta={\cal M}({\cal C}_\theta)$.
\begin{lemma}\label{cycupp}For any $\theta\geq 3$, we have ${\cal M}_{\theta+2}\geq {\cal M}_\theta+\theta^2+4\theta+5$.
\end{lemma}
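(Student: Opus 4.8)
The plan is to prove the recursion by deleting the two largest labels from an \emph{optimal} labeling of ${\cal C}_{\theta+2}$ and comparing the result with ${\cal M}_\theta$. Write $n=\theta+2$; since $\theta\ge 3$ we have $n\ge 5$, and a short computation gives $\theta^2+4\theta+5=(\theta+2)^2+1=n^2+1$, so the claim is exactly ${\cal M}_n\ge {\cal M}_\theta+n^2+1$. Fix a bijection $f\colon V({\cal C}_n)\to[n]$ with ${\cal M}(f)={\cal M}_n$, and let $z,w$ be the vertices carrying $f(z)=n$ and $f(w)=n-1$. Deleting $z,w$ and re-closing the cycle produces a bijection $g$ onto the surviving labels, which are precisely $1,\dots,\theta$; hence $g$ is a labeling of ${\cal C}_\theta$ and ${\cal M}_\theta\le {\cal M}(g)$. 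As ${\cal M}_n={\cal M}(f)={\cal M}(g)+\big({\cal M}(f)-{\cal M}(g)\big)\ge{\cal M}_\theta+\big({\cal M}(f)-{\cal M}(g)\big)$, everything reduces to showing that the drop ${\cal M}(f)-{\cal M}(g)$ is at least $n^2+1$.

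The easy case is $z\sim w$, where the local picture is $a-z-w-b$ for distinct labels $a,b\in[\theta]$ (distinct because $n\ge 5$). Deleting $z,w$ and inserting the edge $ab$ removes $an+n(n-1)+(n-1)b$ and adds $ab$, so
\[ {\cal M}(f)-{\cal M}(g)=n(n-1)+a(n-b)+(n-1)b. \]
Since $a\ge 1$, $b\ge 1$, and $b\le\theta=n-2$ forces $n-b\ge 2$, the last two terms sum to at least $2+(n-1)=n+1$, and the drop is at least $n(n-1)+n+1=n^2+1$, as required. Note that this bound is in fact strict, consistent with the equality instances not living in this configuration.

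The main obstacle is the case $z\not\sim w$. Here deletion splits ${\cal C}_n$ into two arcs; writing $p,q$ for the neighbours of $z$ and $s,t$ for those of $w$, the two admissible re-closures add weight $pq+st$ or $ps+qt$, and the drop equals $n(p+q)+(n-1)(s+t)$ minus the smaller of these. The difficulty is that this quantity can be \emph{strictly smaller} than $n^2+1$: for the optimal labelings of ${\cal C}_5$ and ${\cal C}_6$ (where the two largest labels are forced to be non-adjacent) the drop obtained by re-closing the arcs falls short, the deficit being exactly the sub-optimality of the re-closed cycle. Thus a pure deletion estimate cannot succeed in this case, and the heart of the proof is to recover that deficit. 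I would exploit Remark~\ref{rem2}: in an optimal labeling a vertex with a larger label has a no-larger neighbour-sum among the vertices not adjacent to it, so $z$ and $w$ carry among the smallest neighbour-sums and in particular $p+q\le s+t$, which pins the four labels $p,q,s,t$ to small values. I would then (i) choose the cheaper re-closure, (ii) use this monotonicity together with distinctness of $p,q,s,t$ to bound them from below, and, crucially, (iii) observe that the re-closed cycle necessarily places the very smallest labels adjacent to one another, so a guaranteed exchange improves it by at least the residual $n^2+1$ minus the naive drop — equivalently, one determines the exact alternating ``large-next-to-small'' optimal arrangement of ${\cal C}_\theta$ and computes ${\cal M}_\theta$ directly. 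I expect step (iii), namely quantifying uniformly in $\theta$ how far the re-closed cycle is from optimal, to be the crux, since it is precisely the feature that the tight cases ${\cal C}_5$ and ${\cal C}_6$ probe.
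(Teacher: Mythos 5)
Your easy case ($z\sim w$) is correct, and rewriting $\theta^2+4\theta+5=(\theta+2)^2+1$ is a clean reformulation. But the proposal has a genuine gap: the case $z\not\sim w$ --- which, as you yourself observe, is the case that actually occurs in optimal labelings (e.g.\ the optimal ${\cal C}_6$ labeling $2,4,3,5,1,6$ has $6$ and $5$ non-adjacent, and there the deletion drop is $33<37=n^2+1$) --- is never proved; what you give is a plan whose two load-bearing steps are unsubstantiated. In step (ii), Remark~\ref{rem2} only compares neighbour-sums of non-adjacent vertices against each other, so it yields relative information about $p+q$ and $s+t$; no explicit lower bound on the drop is derived from it. Step (iii) is, as you say, the crux, and it is simply asserted: there is no argument that the re-closed cycle ``necessarily places the very smallest labels adjacent to one another,'' nor that a guaranteed exchange recovers at least the deficit, uniformly in $\theta$. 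The fallback you offer --- determine the exact optimal arrangement of ${\cal C}_\theta$ and compute ${\cal M}_\theta$ directly --- is essentially equivalent to the lemma itself: the paper obtains the value of ${\cal M}_\theta$ \emph{from} this recursion (Lemma~\ref{cycexa}), so invoking a closed form for ${\cal M}_\theta$ here would be circular unless you supply an independent optimality proof for that arrangement, which would be a new and nontrivial piece of work.

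For comparison, the paper sidesteps exactly this trap by deleting the labels $1$ and $\theta+2$ (smallest and largest) from a putative counterexample, arguing by contradiction. When $1$ and $\theta+2$ are adjacent or at distance two, a uniform shift $l\mapsto l-1$ renormalizes the surviving labels and the estimate closes directly. When they are far apart, it uses Remark~\ref{rem2} to rule out certain adjacencies, deletes $1$ together with its larger neighbour $w$ instead, applies a non-uniform shift ($l\mapsto l-2$ for $l>w$, $l\mapsto l-1$ for $l<w$), and bounds the resulting error term $S$ (the contribution of edges crossing between labels below and above $w$) by an explicit optimization over $w$. That quantitative control of the relabeling error is precisely what your step (iii) is missing; some argument of this kind (or a genuinely new idea) is needed before your approach becomes a proof.
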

\begin{proof}
  We prove it by contradiction. Suppose that there exists a labeling $f$ of ${\cal C}_{\theta+2}$  such that ${\cal M}(f)$ is minimized and ${\cal M}(f)<{\cal M}_\theta+\theta^2+4\theta+5$. From $f$, if we can deduce a labeling $f'$ of ${\cal C}_\theta$ satisfying ${\cal M}(f')<{\cal M}_\theta$, then we are done. We split the proof into three cases.

  \textbf{Case 1}: Suppose that in ${\cal C}_{\theta+2}$, we have a segment with labels $\cdots x~\theta+2~1~y\cdots$. We operate this cycle in the  following two steps.
  \begin{enumerate}
    \item[(S1)]Delete the two vertices with labels $1$ and $\theta+2$, and connect the two vertices with labels $x$ and $y$. The value of ${\cal M}(f)$ becomes $M={\cal M}(f)-x(\theta+2)-(\theta+2)-y+xy.$
    \item[(S2)]Change each label $l$ by $l-1$ of vertices in the cycle of length $\theta$. Then we obtain a labeling $f'$ of  a cycle ${\cal C}_\theta$.
  \end{enumerate}  Now  we compute ${\cal M}(f')$. Note that in (S1), $M=\sum_{u\sim v}uv$, and each $u\in [2,\theta+1]$ appears twice in this summation. Since $uv=(u-1)(v-1)+u+v-1$, we have
          \[M=\sum_{u\sim v}(u-1)(v-1)+2\sum_{i=2}^{\theta+1}i-\theta={\cal M}(f')+\theta^2+2\theta.\]
          So after (S2), we have
             \begin{equation}
    \begin{aligned}
       {\cal M}(f')& ={\cal M}(f)-x(\theta+2)-(\theta+2)-y+xy-\theta^2-2\theta\\
       &<{\cal M}_\theta+\theta^2+4\theta+5-x(\theta+2)-(\theta+2)-y+xy-\theta^2-2\theta\\
       &={\cal M}_\theta+(x-1)(y-\theta-2)+1.\\
    \end{aligned}
  \end{equation}
Since  $y<\theta+2$ and $x\geq 2$, we have ${\cal M}(f')<{\cal M}_\theta$, which is a contradiction.

           \textbf{Case 2}: Suppose that in ${\cal C}_{\theta+2}$, we have a segment with labels $\cdots x~\theta+2~z~1~y\cdots$.
Delete the two vertices with labels $1$ and $\theta+2$, and connect $xz$ and  $yz$. The value of ${\cal M}(f)$ becomes \[M={\cal M}(f)-(x+z)(\theta+2)-(y+z)+xz+yz.\]
Applying (S2) in Case 1, we obtain a labeling $f'$ of  a cycle ${\cal C}_\theta$.
Similar to Case 1, we have
             \begin{equation}
    \begin{aligned}
       {\cal M}(f')&=M-\theta^2-2\theta\\
        &={\cal M}(f)-(x+z)(\theta+2)-(y+z)+xz+yz-\theta^2-2\theta\\
        &<{\cal M}_\theta+\theta^2+4\theta+5-(x+z)(\theta+2)-(y+z)+xz+yz-\theta^2-2\theta\\
        &={\cal M}_\theta+(x-2)(z-\theta-2)+(z-1)(y-1)-z\theta.\\
    \end{aligned}
  \end{equation}
Since  $y,z<\theta+2$ and $x\geq 2$, we have ${\cal M}(f')<{\cal M}_\theta$, which is a contradiction.

           \textbf{Case 3}: Suppose that in ${\cal C}_{\theta+2}$, we have a segment with labels $\cdots x~\theta+2~z\cdots w~1~y\cdots$. We claim that $z$ and $w$ are not adjacent. Otherwise, $f(N(z))=\theta+2+w$ and $f(N(1))=y+w$, thus $f(N(1))<f(N(z))$, which contradicts to Remark~\ref{rem2}. Similarly, $x$ and $y$ can not be adjacent. Suppose $w>y$ (the case when $w<y$ is similar), and a label $a$ is next to $w$ on the left, i.e., the segment is $\cdots a~w~1~y\cdots$ around $1$. We operate this cycle in the  following two steps.
           %Since $f(z)\geq \theta+2+2$ and $f(1)=w+y< 2w$, by Remark~\ref{rem2}, we have $2w>\theta+4$, i.e., $w>\theta/2+2$.
  \begin{enumerate}
    \item[(T1)]Delete the two vertices with labels $1$ and $w$, and connect the two vertices with labels $a$ and $y$. The value of ${\cal M}(f)$ becomes $M={\cal M}(f)-aw-w-y+ay.$
    \item[(T2)]Change each label $l>w$ by $l-2$, and each label  $l<w$ by $l-1$ in the cycle of length $\theta$. Then we obtain a labeling $f'$ of  a cycle ${\cal C}_\theta$.
  \end{enumerate}
 Since $(u-2)(v-2)=uv-2(u+v)+4$, $(u-1)(v-2)=uv-2u-v+2$ and $(u-1)(v-1)=uv-(u+v)+1$, we have
            \begin{equation}
    \begin{aligned}
       {\cal M}(f')&=  M-\sum_{u\sim v: u,v> w}(2u+2v-4)-\sum_{u\sim v: u< w<v}(2u+v-2)-\sum_{u\sim v:u,v< w}(u+v-1)\\
         %&=M-\sum_{u\sim v~u,v\geq w+1}\frac{1}{2}(2u+2v-4)-\sum_{u\sim v~u\leq w-1,v\geq w+1}\frac{1}{2}(u+2v+u-v-2)\\
%         &-\sum_{u\sim v~u,v\leq w-1}\frac{1}{2}(u+v-1)\\
         &= M-4\sum_{i=w+1}^{\theta+2}i-2\sum_{i=2}^{w-1}i+\sum_{u\sim v: u< w<v}(v-u+2)+\sum_{u\sim v:u,v> w}4+\sum_{u\sim v:u,v< w}1\\
         &\triangleq M-2(\theta+w+3)(\theta-w+2)-(w+1)(w-2)+S,
    \end{aligned}
  \end{equation}
where $S$ is the sum of the last three terms.   Now we compute ${\cal M}(f')-{\cal M}_\theta$, which is less than
\begin{equation}\label{maxdiff}
 T(w):=\theta^2+4\theta+5-aw-w-y+ay-2(\theta+w+3)(\theta-w+2)-(w+1)(w-2)+S.
\end{equation}

We claim that $T(w)<0$ for any $w$, thus ${\cal M}(f')<{\cal M}_\theta$, a contradiction. We prove it by upper bounding the value of $S$.
Note that when $u\sim v,$ and $u< w<v$, one has $v-u+2\geq 4.$ Hence, the more crossing edges there are between $[2,w-1]$ and  $[w+1,\theta+2]$, the bigger $S$ it is. Based on the value of $w$, we split into two cases.

If $w\leq \frac{\theta}{2}+2,$ then $\left|[2,w-1]\right|\leq |[w+1,\theta+2]|$. Therefore,
\begin{equation*}
    \begin{aligned}
       S&\leq 2\sum\limits_{i=\theta-w+5}^{\theta+2}i-2\sum\limits_{i=2}^{w-1}i+4(w-2)+4(\theta-2w+4)\\
       &=(2\theta-w+11)(w-2)-(w+1)(w-2)+4(\theta-2w+4)\\
       &=-2w^2+(2\theta+6)w-4,
    \end{aligned}
  \end{equation*}
  which is maximized when $w=\frac{\theta}{2}+2$. Hence,
  \begin{equation*}
    \begin{aligned}
      T(w) & \leq -w^2+(2\theta+8)w-(\theta+3)^2+a(y-w)-y \\
       & =a(y-w)-y-\frac{\theta^2}{4}+3<0.
    \end{aligned}
  \end{equation*}

If $w\geq \frac{\theta}{2}+2,$ then $|[2,w-1]|\geq |[w+1,\theta+2]|.$ Therefore,
\begin{equation*}
    \begin{aligned}
      S&\leq 2\sum\limits_{i=w+1}^{\theta+2}i-2\sum\limits_{i=2}^{\theta-w+3}i+4(\theta-w+2)+(2w-\theta-4)\\
      &=(\theta+w+7)(\theta-w+2)-(\theta-w+5)(\theta-w+2)+(2w-\theta-4)\\
      &=-2w^2+(2\theta+4)w+\theta,
    \end{aligned}
  \end{equation*}
  which is maximized when $w=\theta+1$. Then
  \begin{equation*}
    \begin{aligned}
      T(w) & \leq  -w^2+(2\theta+5)w-\theta^2-5\theta-5+(a-1)(y-w)\\
       & =(a-1)(y-w)-1< 0.
    \end{aligned}
  \end{equation*}\end{proof}
%\par Comparing two weights
%\par In Case 1, choose $w=\frac{\theta}{2}+2,$ then $T$ is maximum.
% $$T_{max}=-w^2+(2\theta+6)w-(\theta+2)^2-2\theta-1+a(y-w)-y=a(y-w)-y-(\theta/2+1)^2+4<0.$$
%\par In Case 2, choose $w=\theta+1,$ then $T$ is maximum.
%$$T_{max}=-w^2+(2\theta+7)w-\theta^2-7\theta-9+(a-1)(y-w)=-3+(a-1)(y-w)\leq 0.$$
%\par In both two cases, we get that ${\cal M}(f')<{\cal M}_\theta$, which is a contradiction.
%\par When $w<y$ and a label $b$ is next to $y$ on the right, i.e., the segment is $\cdots w~1~y~b\cdots$ around $1$. We operate similar steps on case $w>y$ via substituting $w$ by $y$. and also get  ${\cal M}(f')<{\cal M}_\theta$, which is a contradiction.

Next, we show that the  lower bound of the  \textsf{MinPS} of  $C_\theta$ in Lemma~\ref{cycupp} can be achieved.

\begin{lemma}\label{cycexa}
  For any $\theta\geq 3$, we have ${\cal M}_{\theta+2}= {\cal M}_\theta+\theta^2+4\theta+5$. Hence, ${\cal M}_{2k+1}=(4k^3+12k^2+14k+3)/3$ and ${\cal M}_{2k+2}=(4k^3+18k^2+29k+12)/3$ for each $k\geq 1$.
\end{lemma}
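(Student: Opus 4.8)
The plan is to prove the recurrence by matching the lower bound of Lemma~\ref{cycupp} with an explicit construction. Since Lemma~\ref{cycupp} already gives ${\cal M}_{\theta+2}\geq {\cal M}_\theta+\theta^2+4\theta+5$, it suffices to exhibit, for every $\theta\geq 3$, a labeling of ${\cal C}_{\theta+2}$ whose product sum equals exactly ${\cal M}_\theta+\theta^2+4\theta+5$; the two values then coincide and the displayed closed forms follow by telescoping the recurrence. The construction will reverse the two-step operation (S1)--(S2) used in Case~1 of Lemma~\ref{cycupp}: there two adjacent vertices labeled $1$ and $\theta+2$ were deleted and all labels were decreased by one, so here I re-insert such a pair after first increasing all labels by one.

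I would argue by induction on $\theta$, treating the two parities separately (the recurrence $\theta\mapsto\theta+2$ preserves parity), and carry along the stronger invariant that ${\cal C}_\theta$ admits an optimal labeling in which the smallest label $1$ and the largest label $\theta$ lie on adjacent vertices. The base cases are checked directly: in the triangle ${\cal C}_3$ every pair of vertices is adjacent and ${\cal M}_3=11$, while for ${\cal C}_4$ the cyclic order $1\,3\,2\,4$ attains the minimum value $21$ and places $1$ next to $4$. For the inductive step I start from an optimal labeling of ${\cal C}_\theta$ in which $1$ is adjacent to $\theta$. First I raise every label by one; since each of the $\theta$ edges $uv$ becomes $(u+1)(v+1)=uv+u+v+1$ and each vertex has degree two, the product sum increases by $\sum_{u\sim v}(u+v)+\theta=\theta(\theta+1)+\theta=\theta^2+2\theta$, and the labels now form $\{2,\dots,\theta+1\}$. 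Next I split the edge joining the relabeled pair $2$ and $\theta+1$ and insert two new vertices labeled $1$ and $\theta+2$, producing the segment $2-(\theta+2)-1-(\theta+1)$. Deleting the edge $2\cdot(\theta+1)$ and adding the three edges $2\cdot(\theta+2)$, $(\theta+2)\cdot 1$, $1\cdot(\theta+1)$ changes the sum by exactly $2\theta+5$, so the resulting labeling of ${\cal C}_{\theta+2}$ has value ${\cal M}_\theta+\theta^2+4\theta+5$.

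Comparing this against Lemma~\ref{cycupp} proves the recurrence and certifies that the constructed labeling is optimal; moreover its largest label $\theta+2$ is adjacent to $1$, so the adjacency invariant is automatically restored for the larger cycle and the induction closes. Finally, solving ${\cal M}_{\theta+2}={\cal M}_\theta+\theta^2+4\theta+5$ by summing from the base values ${\cal M}_3=11$ and ${\cal M}_4=21$ gives the two stated cubic formulas; this last step is a routine telescoping computation that I would not grind through.

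The main obstacle is guaranteeing that the insertion can always be performed at a suitable edge, namely that an optimal ${\cal C}_\theta$ really does have $1$ adjacent to $\theta$. This is exactly why I fold that adjacency into the induction hypothesis rather than treating it as an afterthought. The choice of edge is forced by a minimization: over all adjacent relabeled pairs $x,y\in\{2,\dots,\theta+1\}$ the insertion cost equals $(\theta+2)(x+1)-y(x-1)$, which is minimized precisely at $x=2$, $y=\theta+1$, giving the value $2\theta+5$ needed to hit the lower bound. The pleasant feature that makes the induction work is that performing this optimal insertion regenerates the same $1$-to-largest adjacency in the enlarged cycle, so the invariant propagates for free.
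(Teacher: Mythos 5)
Your proof is correct and follows essentially the same route as the paper: the same induction with the strengthened invariant that an optimal labeling of ${\cal C}_\theta$ places $1$ adjacent to $\theta$, the same base cases ${\cal M}_3=11$, ${\cal M}_4=21$, and the same insertion of labels $1$ and $\theta+2$ between the shifted pair $2$ and $\theta+1$, matched against the lower bound of Lemma~\ref{cycupp}. Your arithmetic ($\theta^2+2\theta$ from the shift plus $2\theta+5$ from the edge exchange) agrees exactly with the paper's computation.
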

\begin{proof}
We prove a stronger statement that for each   $\theta\geq 3$, there exists a labeling $f$ for ${\cal C}_\theta$ achieving ${\cal M}_\theta$ such that the labels $1$ and $\theta$ are adjacent.

For $\theta=3$, ${\cal M}(f)$ is constant and ${\cal M}_{3}=11$, and two labels $1$ and $3$ are adjacent. For $\theta=4$, by Lemma~\ref{multip}, ${\cal M}_{4}=21$, and two labels $1$ and $4$ are adjacent. Assume that for any cycles of length no more than $\theta$, our statement is true, then we prove that it is true for $\theta+2$. By assumption, there exists a  labeling $f'$ with ${\cal M}(f')={\cal M}_\theta$ for ${\cal C}_\theta$ such that the labels $1$ and $\theta$ are adjacent. Now increase each label of $f'$ by one, then we have two labels $2$ and $\theta+1$ adjacent. By inserting two vertices with labels $1,\theta+2$ between $2$ and $\theta+1$, we have a segment as $\cdots 2 ~\theta+2~1~\theta+1\cdots$, and obtain a labeling $f$ of a cycle ${\cal C}_{\theta+2}$. It is easy to check that
\begin{equation*}
  \begin{aligned}
  {\cal M}(f)&= {\cal M}_\theta+2\sum_{i=1}^{\theta}i+\theta-2(\theta+1)+2(\theta+2)+(\theta+2)+(\theta+1)\\
  &={\cal M}_\theta+\theta^2+4\theta+5.\\
  \end{aligned}
\end{equation*}
Hence, ${\cal M}_{\theta+2}\leq {\cal M}_\theta+\theta^2+4\theta+5$. By Lemma~\ref{cycupp}, we have proved our statement. The exact values can be easily computed from ${\cal M}_{3}$, ${\cal M}_{4}$, and the recursion.
\end{proof}

We further show that the  labeling  in Lemma~\ref{cycexa} for ${\cal C}_\theta$ also maximizes the access-minsum in Problem~\ref{maxminsum}.

\begin{theorem}
There exists a  labeling for ${\cal C}_\theta$ which is optimal for both  Problem~\ref{maxminsum} and Problem~\ref{minvar}.
\end{theorem}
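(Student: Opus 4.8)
The plan is to exhibit a single labeling that is simultaneously MinVar-optimal and MaxMinSum-optimal, and the natural candidate is the labeling constructed in the proof of Lemma~\ref{cycexa}. First I would record the dictionary between the two problems on a cycle. Since $L({\cal C}_\theta)={\cal C}_\theta$, a block labeling $\sigma$ of ${\cal C}_\theta$ is the same data as a vertex labeling of its line graph, and the access (popularity) of a vertex of ${\cal C}_\theta$ is exactly the sum of the two cyclically consecutive edge-labels meeting at it. Writing the edge labels in cyclic order as $a_1,a_2,\ldots,a_\theta$, a permutation of $[\theta]$, the access values are the consecutive sums $a_i+a_{i+1}$, the quantity ${\cal M}(f)$ of Problem~\ref{graphpro} is $\sum_i a_i a_{i+1}$, and by Lemma~\ref{var} minimizing $Var$ is equivalent to minimizing $\sum_i a_i a_{i+1}$. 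Consequently the labeling of Lemma~\ref{cycexa}, which achieves ${\cal M}_\theta$, is already optimal for Problem~\ref{minvar}, and it remains only to show that it is also optimal for Problem~\ref{maxminsum}.

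Next I would pin down the optimum of the MaxMinSum problem on ${\cal C}_\theta$. The label $1$ sits between two distinct neighbors, whose smaller value is at most $\theta-1$; the access on that side is therefore at most $\theta$, so the access-minsum is at most $\theta$ for \emph{every} labeling. This gives the upper bound $\theta$ for the optimum of Problem~\ref{maxminsum}, and the remaining task is to verify that the Lemma~\ref{cycexa} labeling attains it, i.e., that all of its consecutive sums are at least $\theta$.

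I would prove this by an induction running in lockstep with the recursive construction of Lemma~\ref{cycexa}, strengthening the statement to: the constructed labeling of a cycle of length $\ell$ has all consecutive sums at least $\ell$ and keeps the labels $1$ and $\ell$ adjacent. The base cases $\ell=3$ and $\ell=4$ are immediate from the explicit small labelings. For the inductive step on ${\cal C}_{\theta+2}$, incrementing every label of the ${\cal C}_\theta$-labeling by $1$ raises every consecutive sum by $2$, hence to at least $\theta+2$, and makes $2$ and $\theta+1$ adjacent; inserting the segment $2~(\theta+2)~1~(\theta+1)$ destroys only the $2$--$(\theta+1)$ edge and creates three new sums $\theta+4,\theta+3,\theta+2$, all at least $\theta+2$, while making $1$ and $\theta+2$ adjacent, which reinstates the invariant. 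Thus every access of this labeling is at least $\theta$, so the access-minsum equals $\theta$ and matches the upper bound, making the labeling MaxMinSum-optimal as well.

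The only delicate point, and the main obstacle, is the bookkeeping in the inductive step: one must confirm that the single edge removed by the insertion is exactly the $2$--$(\theta+1)$ edge, so that no previously counted sum is silently lost, and that all three freshly created sums together with the untouched incremented sums clear the threshold $\theta+2>\theta$. Everything else is routine, and combining the three observations---MinPS-optimality from Lemma~\ref{cycexa}, the upper bound $\theta$ on the access-minsum, and the induction giving access-minsum $=\theta$---yields one labeling optimal for both Problem~\ref{maxminsum} and Problem~\ref{minvar}.
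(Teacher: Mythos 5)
Your proposal is correct and follows essentially the same approach as the paper: both use the labeling from Lemma~\ref{cycexa}, observe that the access-minsum of any labeling of ${\cal C}_\theta$ is at most $\theta$, and run an induction in lockstep with the recursive construction of Lemma~\ref{cycexa} to show every consecutive edge-label sum is at least $\theta$ (the step $\cdots 2~(\theta+2)~1~(\theta+1)\cdots$ with the deleted edge $2\sim(\theta+1)$ and the three new sums is exactly the paper's argument). Your write-up is in fact slightly more explicit than the paper's on the upper bound of $\theta$ and on the bookkeeping of which sums are created or destroyed.
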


\begin{proof}
Let ${\cal S}_\theta=(V,E)$  be the set system whose line graph is ${\cal C}_\theta$. Note that ${\cal S}_\theta$ is also a cycle of length $\theta$. The vertex-labeling $f$  of ${\cal C}_\theta$  in Lemma~\ref{cycexa} naturally induces an edge labeling $\sigma$ of ${\cal S}_\theta$, which  minimizes the access-variance of ${\cal S}_\theta$ in the MinVar model.

Now we claim that it also maximizes the access-minsum for the MaxMinSum model by induction on $\theta$. It is easy to get that the maximum access-minsum of ${\cal S}_\theta$ is at most $\theta$. When $\theta=3,4$, it is trivially true. Assume that  for all $\theta$ or less the claim is true, let us consider ${\theta +2}$.
By assumption, the edge labeling $\sigma'$ induced by $f'$  in Lemma \ref{cycexa} for cycle ${\cal} C_{\theta}$  maximizes the access-minsum for ${\cal S}_\theta$, i.e., for each edge $v_i \mathop{\sim} v_j$ in ${\cal} C_{\theta}$, we have $f'(v_i)+f'(v_j)\geq \theta$. By the proof of Lemma \ref{cycexa}, the labeling $f$ looks like $\cdots 2 ~\theta+2~1~\theta+1\cdots$, which is obtained from $f'$ by increasing each label by one, then deleting the edge $2\sim \theta+1$ and adding three new edges $2\sim \theta+2$, $\theta+2\sim 1$, $1\sim \theta+1$. It is clear that for all edges  $v_i\sim v_j$ in ${\cal} C_{\theta+2}$, we have $f(v_i)+f(v_j)\geq \theta+2$. Hence, the induced edge labeling $\sigma$  maximizes the access-minsum of  ${\cal S}_{\theta+2}$.
\end{proof}

%Let ${\cal C}_\theta$ is set system with edge labeling $\sigma: E \rightarrow [\theta] $, and its dual set system ${\cal S}_\theta'=(V',E')$ is also a cycle with corresponding vertex labeling, where $V'=E,E'=V.$ Abusing notation, we also denote the vertex labeling of ${\cal S}_\theta$ by $\sigma$.
%In the MaxMinSum model, $MinSum(({\cal S_\theta})_\sigma)=\min \{\sigma(v_i)+\sigma(v_j):v_i \mathop{\sim}\limits_{{\cal S}_\theta'} v_j\} $. If the block have MinSum including some vertex $v_i$ with labeling $1$. then $MaxMinSum({\cal S_\theta})= \min\{1+\sigma(v_j):v_i \mathop{\sim}\limits_{{\cal S}_\theta'} v_j\}\leq 1+(\theta-1)=\theta.$
%
%By Lemma \ref{cycexa}, $f$ is one vertex labeling of ${\cal S}_\theta'$ achieve MinVar bound. We show $f$ also achieve MaxMinSum bound by induction on $\theta$.

\section{Estimate $\textsf{MinVar}({\cal S})$ in Problem~\ref{minvar}}\label{s:5}

In this section, we estimate the value of $\textsf{MinVar}({\cal S})$   in Problem~\ref{minvar} for some set systems $\cal S$.

\begin{lemma}\label{ext}
  Let $H_1=(V,E_1)$ and $H_2=(V,E_2)$ be two regular graphs on the same vertex set. If $E_1\cap E_2=\emptyset$ and $H_1$ is supermagic,
  then  the minimum access-variance of  $G=(V,E_1\cup E_2)$ satisfies
  $$\textsf{MinVar}(G)\leq \textsf{MinVar}(H_2).$$
\end{lemma}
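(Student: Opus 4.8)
The plan is to exhibit a single explicit labeling of $G$ whose access-variance equals $\textsf{MinVar}(H_2)$; since $\textsf{MinVar}(G)$ is the minimum over all labelings, this immediately yields the claimed bound. The guiding intuition is that a supermagic labeling of $H_1$ contributes the \emph{same} amount to every vertex's popularity, so incorporating it only shifts all node-popularities by one common constant and therefore leaves the variance unchanged. Hence the only part of the labeling that influences $Var(G_\sigma)$ is the part sitting on the edges of $H_2$, which we will take to be an optimal MinVar labeling of $H_2$.

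Concretely, write $\theta_1=|E_1|$, $\theta_2=|E_2|$, $\theta=\theta_1+\theta_2$, and let $\alpha_1,\alpha_2$ be the regularities of $H_1,H_2$, so that $G$ is $(\alpha_1+\alpha_2)$-regular. First I would fix an optimal labeling $\sigma_2\colon E_2\to[\theta_2]$ with $Var((H_2)_{\sigma_2})=\textsf{MinVar}(H_2)$, together with a supermagic labeling of $H_1$; by definition the latter uses $\theta_1$ consecutive integers, and after shifting them onto the block $\{\theta_2+1,\ldots,\theta\}$ I obtain $\sigma_1\colon E_1\to\{\theta_2+1,\ldots,\theta\}$. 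Because $H_1$ is $\alpha_1$-regular, shifting every label by a common constant changes each vertex sum $\sum_{e\in E_1,\,e\ni v}\sigma_1(e)$ by the same amount $\alpha_1\cdot(\text{shift})$, so $\sigma_1$ stays magic: there is a constant $C$ with $\sum_{e\in E_1,\,e\ni v}\sigma_1(e)=C$ for every $v$. Since $E_1\cap E_2=\emptyset$ and the two label ranges $[\theta_2]$ and $\{\theta_2+1,\ldots,\theta\}$ are disjoint and tile $[\theta]$, the combined map $\sigma=\sigma_1\cup\sigma_2$ is a bijection $E_1\cup E_2\to[\theta]$, i.e. a valid labeling of $G$.

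It then remains to compute $Var(G_\sigma)$. For each vertex $v$ the popularity splits as $p_v^G=\sum_{e\in E_1,\,e\ni v}\sigma_1(e)+\sum_{e\in E_2,\,e\ni v}\sigma_2(e)=C+q_v$, where $q_v:=\sum_{e\in E_2,\,e\ni v}\sigma_2(e)$ is exactly the $H_2$-popularity at $v$. Averaging over $v$ gives $\bar a^G=C+\bar q$ with $\bar q=\tfrac1n\sum_v q_v$ the average $H_2$-popularity, so that $p_v^G-\bar a^G=q_v-\bar q$ for every $v$. Consequently $Var(G_\sigma)=\sum_v(q_v-\bar q)^2=Var((H_2)_{\sigma_2})=\textsf{MinVar}(H_2)$, and the bound $\textsf{MinVar}(G)\le Var(G_\sigma)=\textsf{MinVar}(H_2)$ follows.

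The only genuinely delicate point, and the step I would double-check most carefully, is the claim that the $H_1$-contribution to each popularity is a single constant $C$ independent of $v$; this is exactly where both the supermagic hypothesis and the regularity of $H_1$ are used (regularity guarantees that the magic property survives the shift of the label range), and it is precisely this constancy that makes the covariance term and the $H_1$-variance vanish from $Var(G_\sigma)$. I would also verify the bookkeeping that the two disjoint label ranges tile $[\theta]$ and that $\bar q$ coincides with the average $\bar a$ prescribed for $H_2$ in Problem~\ref{minvar}; both follow from the handshake identities $n\alpha_1=2\theta_1$ and $n\alpha_2=2\theta_2$, and are worth stating explicitly so that the variance identity is applied to $H_2$ with its own correct mean.
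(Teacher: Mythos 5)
Your proof is correct and takes essentially the same approach as the paper: exhibit the single labeling of $G$ that places a (shifted) supermagic labeling of $H_1$ on the labels $[|E_2|+1,|E_1|+|E_2|]$ and an optimal MinVar labeling of $H_2$ on $[1,|E_2|]$, so that the $H_1$-contribution to every node's popularity is one common constant and $Var(G_\sigma)=\textsf{MinVar}(H_2)$. You simply make explicit the details the paper leaves as ``easy to compute'' (that regularity of $H_1$ preserves the magic property under the shift of label range, and the mean/variance bookkeeping), which is fine.
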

\begin{proof}
  We prove it by giving an edge labeling $\sigma$ of $G$. For  graph $H_1$, we label the edges by set $[|E_2|+1,|E_1|+|E_2|]$ such that $\sigma^*(v)$ is constant for all $v\in V$. This can be done since $H_1$ is supermagic. Then label edges in $H_2$ by $[1,|E_2|]$ such that the variance is $\textsf{MinVar}(H_2)$. It is easy to compute that $Var(G_\sigma)=\textsf{MinVar}(H_2)$.
\end{proof}

Now consider the access-variance of $K_{4r}$, which is an open case in Question~\ref{q:35}. Further, FR codes based on  complete graphs yield the first class of  MBR codes which have the additional property of exact \emph{repair by transfer} \cite{rashmi2009explicit,shah2011distributed}. To estimate $\textsf{MinVar}(K_{4r})$,
we view $K_{4r}$ as a disjoint union of $rK_4$ and $T(4r,r)$. By Theorem~\ref{tura}, $T(4r,r)$ is supermagic. By Lemma~\ref{ext}, we have $\textsf{MinVar}(K_{4r})\leq \textsf{MinVar}(rK_4)$. Note that the line graph of $rK_4$ is $rT(6,3)$, for which the \textsf{MinPS} has been determined in Lemma~\ref{mtnr}. By the connection of  values ${\cal M}(rT(6,3))$  and $\textsf{MinVar}(rK_4)$ in Lemma~\ref{var}, we can give an upper bound of $\textsf{MinVar}(K_{4r})$.

\begin{lemma}\label{k4r}
For any positive integer $r$, we have \begin{equation*}
\textsf{MinVar}(K_{4r})\leq \begin{dcases}
3r, &r \text{ is odd};\\
7r, &r \text{ is even}.
\end{dcases}
\end{equation*}
\end{lemma}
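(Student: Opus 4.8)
The plan is to follow exactly the reduction outlined in the paragraph preceding the statement, combining the structural lemmas that have already been established. First I would use the decomposition of the complete graph $K_{4r}$ into edge-disjoint pieces: write $K_{4r}$ as the union of a Tur\'{a}n graph $T(4r,r)$ and $r$ disjoint copies of $K_4$, i.e. $rK_4$. This is a valid edge-partition because grouping the $4r$ vertices into $r$ groups of four, the intra-group edges form $rK_4$ and the inter-group edges form the complete $r$-partite graph with parts of size four, which is precisely $T(4r,r)$. By Theorem~\ref{tura}, the Tur\'{a}n graph $T(4r,r)$ (the case $n=4r$, parts of size $4=n/r\ge 3$, falling under item (3) with the exception $r\equiv 0\pmod 4$ and $n/r$ odd — here $n/r=4$ is even, so no exception applies) is supermagic. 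Then Lemma~\ref{ext}, applied with $H_1=T(4r,r)$ and $H_2=rK_4$, immediately gives $\textsf{MinVar}(K_{4r})\le \textsf{MinVar}(rK_4)$.

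The second step is to convert the bound on $\textsf{MinVar}(rK_4)$ into a \textsf{MinPS} computation via the line-graph correspondence. The line graph of $rK_4$ is $rT(6,3)$: each $K_4$ has line graph $T(6,3)$ (the line graph of $K_4$ is the octahedron, which is the complete tripartite graph $K_{2,2,2}=T(6,3)$), and disjoint unions of graphs have disjoint unions of line graphs. Since $rK_4$ is a $3$-regular linear set system in its own right (each $K_4$ being $2$-uniform), Lemma~\ref{var} applies with $\rho$, $\alpha$, $\theta$ being the parameters of the underlying set system whose line graph is $rT(6,3)$; it tells me $\textsf{MinVar}(rK_4)=\mathcal{M}(rT(6,3))+c$ for the explicit constant $c=c(\theta,\rho,\alpha)$ given there. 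So the whole problem reduces to evaluating $\mathcal{M}(rT(6,3))$ and then adding the constant.

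For $\mathcal{M}(rT(6,3))$ I would invoke Lemma~\ref{mtnr} with $n=6$, $r_{\text{Turán}}=3$, $l=n/r=2$, and $m=r$ copies, which reduces it to $\mathcal{M}(rK_3)$ through the closed formula
\begin{equation*}
\mathcal{M}(rT(6,3))={3\choose 2}\frac{r l^2(1-l)^2}{4}+\frac{l^3-l^4}{4}\,\cdot 3r(3r+1)\cdot 2+l^4\,\mathcal{M}(rK_3)
\end{equation*}
with $l=2$ substituted. The remaining piece $\mathcal{M}(rK_3)$ comes from Lemma~\ref{mk} and Remark~\ref{Mmkr}: since $r=3$ here means the copies are $K_3$ ($r$ odd in the remark's notation), the parity of the number of copies $m=r$ determines which case applies. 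When $r$ is odd we are in the ``$m,r$ both odd'' case giving the clean value $\mathcal{M}(rK_3)=\frac{9r(3r+1)^2}{8}-\frac{3r(3r+1)(6r+1)}{12}$, whereas when $r$ is even we are in the ``$r$ odd, $m$ even'' case which carries the extra additive term $+m/8=+r/8$. This parity split is exactly what produces the two different answers $3r$ versus $7r$ in the statement.

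The main obstacle will be the bookkeeping of the additive constants, not any conceptual difficulty. I expect the routine but delicate part to be verifying that, after substituting the explicit formulas for $\mathcal{M}(rK_3)$ (with its parity-dependent term) into the Lemma~\ref{mtnr} expression, and then adding the constant $c$ from Lemma~\ref{var}, all the high-degree polynomial terms in $r$ cancel and one is left with the linear values $3r$ (odd case) and $7r$ (even case). The extra $r/8$ in the even-$r$ case of $\mathcal{M}(rK_3)$, scaled by $l^4=16$, contributes the gap $16\cdot(r/8)=2r$ between the two cases, which accounts for the difference $7r-3r=4r$ only in part, so I would track the constant $c$ carefully to confirm the final arithmetic. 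Throughout I would double-check that the parameters $(\theta,\rho,\alpha)$ fed into Lemma~\ref{var} are the ones for the set system dual to $rK_4$, so that the constant $c$ is computed against the correct $\theta=\binom{4}{2}r=6r$.
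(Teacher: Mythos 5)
Your plan follows the paper's proof step for step---the same edge-decomposition $K_{4r}=T(4r,r)\cup rK_4$, the same appeal to Theorem~\ref{tura} and Lemma~\ref{ext}, the same identification $L(rK_4)=rT(6,3)$, and the same reduction through Lemma~\ref{mtnr} and Remark~\ref{Mmkr} to ${\cal M}(rK_3)$---but it contains one concrete error in the conversion step, and that error would make the final arithmetic fail. Lemma~\ref{var} expresses the access-variance as the quadratic form $(i_1,\ldots,i_\theta)\,A(L({\cal S}))\,(i_1,\ldots,i_\theta)^T+c$; since the adjacency matrix is symmetric, this form counts every unordered edge \emph{twice}, whereas ${\cal M}(f)$ in Problem~\ref{graphpro} counts each edge only \emph{once} (the paper states this explicitly). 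So the correct relation is $\textsf{MinVar}(rK_4)=2{\cal M}(rT(6,3))+c$, not $\textsf{MinVar}(rK_4)={\cal M}(rT(6,3))+c$ as you wrote; this is precisely why the paper's proof works with the bound $32M-72r^2-18r+c$, i.e.\ $2\bigl(16M-36r^2-9r\bigr)+c$, where ${\cal M}(rT(6,3))=16M-36r^2-9r$ and $M={\cal M}(rK_3)$. With your factor-one relation the numbers cannot come out: for odd $r$ one would get
\begin{equation*}
16M-36r^2-9r+c \;=\; \bigl(90r^3+36r^2+5r\bigr)-r(6r+1)(30r+7)\;=\;-90r^3-36r^2-2r\;<\;0,
\end{equation*}
which is absurd, since the access-variance is a sum of squares.

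The missing factor of $2$ is also exactly what resolves the discrepancy you noticed at the end but misdiagnosed. You correctly computed that the parity-dependent term $r/8$ in ${\cal M}(rK_3)$, scaled by $l^4=16$, creates a gap of only $2r$ between the odd and even cases, short of the required $7r-3r=4r$, and you hoped that careful tracking of the constant $c$ would supply the rest. It cannot: $c=\frac{\rho\theta(\theta+1)(2\theta+1)}{6}-\frac{\rho\alpha\theta(\theta+1)^2}{4}=-r(6r+1)(30r+7)$ depends only on $(\theta,\rho,\alpha)=(6r,2,3)$ and is identical for both parities, so it contributes nothing to the gap. Once the factor of $2$ from the quadratic form is restored, the parity gap becomes $2\cdot 16\cdot(r/8)=4r$ as required, and the full computation gives $2{\cal M}(rT(6,3))+c=3r$ for odd $r$ and $7r$ for even $r$, matching the statement.
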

\begin{proof} By setting parameters in Lemma \ref{var} as $\rho=2,\alpha=3,\theta=6r$, and  parameters in Lemma \ref{mtnr} as $m=r,n=6,r=3$, the value of $\textsf{MinVar}(K_{4r})$ is upper bounded by $32M-72r^2-18r+c$, where $c=-r(6r+1)(30r+7)$ is determined by Lemma \ref{var}, and
\begin{equation*}
M=\begin{dcases}
\frac{45r^3+36r^2+7r}{8}, &r \text{ is odd};\\
\frac{45r^3+36r^2+8r}{8}, &r \text{ is even},
\end{dcases}
\end{equation*}
 is determined by Lemma~\ref{mk} for $rK_3$.
\end{proof}

Combining Lemma~\ref{var} and the averaging upper bound in Lemma~\ref{gebd}, we have $\textsf{MinVar}(K_{4r})=O(r^7)$, which is greatly improved to $O(r)$ by Lemma~\ref{k4r}. It is easy to derive a lower bound to show that $\textsf{MinVar}(K_{4r})=\Theta(r)$. In fact, the average popularity of all nodes in $K_{4r}$ is $(4r-1)(8r^2-2r+1)/2$, whose fractional part is $0.5$. Since the total popularity for each node is an integer, we have $\textsf{MinVar}(K_{4r})\geq 4r\times(0.5)^2=r$. We raise the following question.

\begin{question}
  Whether the upper bound in Lemma~\ref{k4r} is tight? It is true for $r=1$.
\end{question}

Now  we consider  Tur\'{a}n graphs. Let $G=L(T(n,r))$  be the line graph of a Tur\'{a}n graph with $r\mid n$ and $r\geq 2$.  By Question~\ref{q:35}, we only need to deal with the case $r\equiv 0\pmod 4$ and $\frac{n}{r}$ is odd.  Chetwynd and Hilton gave the following property of regular graphs.
% \begin{equation}\label{varturan}
%    \begin{aligned}
%      M(f)&=(f(v_1)+f(v_2)+\cdots +f(v_\frac{n}{2}))^2+(f(v_{\frac{n}{2}+1})+f(v_{\frac{n}{2}+2})+\cdots +f(v_{n}))^2+\cdots \\
%                   &+(f(v_{\frac{n^2-n+4}{4}})+f(v_{\frac{n^2-n+8}{4}})+\cdots+
%                   f(v_{\frac{n^2}{4}}))^2+(f(v_1+i_{\frac{n}{2}+1})+\cdots+f(v_{\frac{n^2-n+4}{4}}^2))^2\\
%                   &+(f(v_2)+f(v_{\frac{n}{2}+2})+\cdots
%                    +f(v_{\frac{n^2-n+8}{4}}))^2+(f(v_{\frac{n}{2}})+f(v_n)+\cdots +f(v_{\frac{n^2}{4}}))^2+C_1.
%    \end{aligned}
%  \end{equation}
%Since $f(v_1)+f(v_2)+\cdots+f(v_{\frac{n^2}{4}})=\sum\limits_{i=1}^{{n^2}/{4}}i,$ By Cauchy-Schwarz inequality. $M(f)$ is minimum equal to find a semi-magic square of order $n/2.$ some constructions can be find in \cite{colbourn2010crc}.

\begin{theorem}\label{fact}\cite{1-factorizable}
  Let $G$ be a $d$-regular graph of $2n$ vertices and $d\geq \frac{12}{7}n.$ Then $G$ is $1$-factorable.
  \end{theorem}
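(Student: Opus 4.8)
The plan is to translate $1$-factorability into a statement about the chromatic index and then to spend the density hypothesis on removing the only obstruction. First I would note that for a $d$-regular graph $G$ on an even number $2n$ of vertices, a proper edge-colouring with exactly $d$ colours is \emph{the same thing} as a $1$-factorization: in such a colouring each vertex meets one edge of every colour, so each colour class is a matching saturating all $2n$ vertices, i.e. a perfect matching. Hence $G$ is $1$-factorable if and only if $\chi'(G)=d$, and Vizing's theorem forces $\chi'(G)\in\{d,d+1\}$. The whole task is thus to exclude $\chi'(G)=d+1$, i.e. to show $G$ is of Class $1$.

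The difficulty, and the reason a density bound is unavoidable, is that dense regular graphs can fail to be $1$-factorable: the Petersen graph is $3$-regular on $10$ vertices, has perfect matchings, yet is Class $2$. So no greedy ``peel off a perfect matching'' argument works, and one must understand the obstruction. The conjecturally complete obstruction is an \emph{overfull} subgraph, a subgraph $H\subseteq G$ with $|V(H)|$ odd and $|E(H)|>d\lfloor|V(H)|/2\rfloor$; such an $H$ with $\Delta(H)=d$ forces $G$ to be Class $2$. My first concrete step is to check that $G$ has \emph{no} such subgraph. Suppose $H=G[S]$ is overfull with $|S|=s$ odd; since the complement $G[V(G)\setminus S]$ is then overfull as well, I may assume $s\le n$. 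A degree count gives $2|E(H)|=ds-e(S,V(G)\setminus S)$ with $e(S,V(G)\setminus S)\ge s(d-s+1)$, while overfullness rearranges to $e(S,V(G)\setminus S)<d$. For $s=1$ this is vacuous ($|E(H)|=0$), and for $s\ge 3$ it yields $(s-1)d<s(s-1)$, i.e. $d<s\le n$, contradicting $d\ge\frac{12}{7}n>n$. So the easy half already works for any $d>n$, which tells me the constant $\frac{12}{7}$ is not needed here.

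Where the constant must actually be spent is the converse implication ``$G$ has no overfull subgraph $\Rightarrow$ $G$ is Class $1$'', which is exactly the $1$-factorization/overfull conjecture and is \emph{not} available as a black box in this generality. The route I would take to prove it directly in the range $d\ge\frac{12}{7}n$ is a recolouring argument. One assumes $G$ is Class $2$, passes to an edge-critical subgraph (every edge lowers $\chi'$ on deletion), and applies Vizing's Adjacency Lemma: in a critical graph each vertex of degree $d$ has many neighbours that are themselves of degree $d$, forcing a local colour-deficiency. The plan is to combine this local structure with the global bound $d\ge\frac{12}{7}n$ so that the number of vertices compelled to be ``saturated'' exceeds $2n$, a contradiction; the inequality $\frac{12}{7}$ is precisely the threshold at which the Vizing-fan (equivalently, the amalgamation--detachment) counting closes.

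I expect the main obstacle to be exactly this last step: controlling the interaction of the Vizing fans (or Kierstead paths) so that the local recolourings never run out of free colours. The reduction to Class $1$ and the exclusion of overfull subgraphs are short, but turning ``no overfull subgraph'' into an actual $d$-edge-colouring carries the entire weight of the theorem, and it is here that the density hypothesis is consumed. For the purposes of the present paper I would simply invoke this as the cited result of Chetwynd and Hilton and use the resulting $1$-factorization to split the regular line graphs arising from our set systems into perfect matchings.
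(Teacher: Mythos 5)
You should first be aware that the paper contains no proof of this statement at all: it is an external theorem of Chetwynd and Hilton, imported by citation, so your closing decision to ``simply invoke this as the cited result'' coincides exactly with the paper's treatment. What you add before that is partly solid and partly not a proof. The solid part: the equivalence of $1$-factorability with $\chi'(G)=d$ for a $d$-regular graph of even order is correct, and your exclusion of overfull subgraphs is a valid computation --- reducing to $|S|=s\le n$ via the complement (legitimate, since for a regular graph of even order both $G[S]$ and $G[V\setminus S]$ being overfull reduce to the same condition $e(S,\bar S)<d$), then playing $e(S,\bar S)\ge s(d-s+1)$ against $e(S,\bar S)<d$ to force $d<s\le n$; as you note, this needs only $d\ge n$. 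The genuine gap is the step you yourself flag: the implication ``no overfull subgraph $\Rightarrow$ Class $1$'' is the overfull conjecture, which remains open (even restricted to graphs with $\Delta(G)>|V(G)|/3$), and your proposed substitute --- pass to an edge-critical subgraph, apply Vizing's Adjacency Lemma, and assert that ``$\frac{12}{7}$ is precisely the threshold at which the Vizing-fan counting closes'' --- is a plan, not an argument: no counting is carried out, and nothing in the sketch explains why the threshold would be $\frac{12}{7}$ rather than any other constant. The actual Chetwynd--Hilton proof is a long and delicate edge-colouring argument, not a corollary of the adjacency lemma, so as a self-contained proof your proposal fails at its central step. For the purposes of this paper, the citation alone suffices, and your preliminary reductions, while correct, are not required.
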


By Theorem~\ref{fact}, $T(n,r)$ is $1$-factorable when $r\geq 7$, $n$ is even and $r\mid n$. Note that a perfect matching in $T(n,r)$ will be an independent set of $L(T(n,r))$.  Let $m=\frac{(r-1)n}{r}$, and $d=2\left(m-1\right)$. Then $G=L(T(n,r))=(V,E)$ is an $m$-partite $d$-regular graph, where $V=V_1\cup V_2\cup \cdots \cup V_{m}$ with $|V_i|=\frac{n}{2}, i\in[m]$. Further, for each $i\neq j\in [m],$ the subgraph induced by $V_i\cup V_j$ is a $2$-regular graph.

\begin{question}
 For every $r\geq 7$ and even $n$ satisfying $r\mid n,$  can we determine of \textsf{MinPS} of $G=L(T(n,r))$?
\end{question}

\section{Conclusion}\label{s:6}
Motivated by the DRESS codes and access-balancing problem in distributed storage systems, we propose a new combinatorial model, called MinVar model, which is a problem of labeling blocks of  set systems such that the access-variance is minimized (Problem~\ref{minvar}). This problem can be viewed as a generalization of the magic labeling problem when graphs are not super-magic. We further establish an equivalent problem if the set system is linear, which is a vertex-labeling problem of graphs (Problem~\ref{graphpro}). By solving both problems, we are able to find serval families of optimal FR codes based on special graphs, which have minimum access-variance. Besides their applications in access-balancing issue in distributed storage, we think Problems~\ref{minvar} and \ref{graphpro} are interesting by themselves and worth further study in the future. Especially, we restate Question~\ref{q:35} for more attention.

 \emph{Question 3.6: When $n=r\equiv 0 \pmod 4$, or $r \equiv 0 \pmod 4$ and $\frac{n}{r}$ is odd, what is $\textsf{MinVar}({\cal S})$ for ${\cal S}=T(n,r)$?}

 We have given an upper bound in Lemma~\ref{k4r} for the case $n=r\equiv 0 \pmod 4$, which we think is tight. For the other case, we only  have a weak bound in Lemma~\ref{gebd}.

Next, we briefly discuss the model when the access frequencies of information chunks obey the Zipf law. Using the same notations, Problem \ref{graphpro} can be rewritten as follows.
\begin{problem}\label{graphprozipf}
  Given a $d$-regular graph $G$ with $\theta$ vertices, $v_1,v_2,\cdots,v_{\theta}$ and some $\beta >0$,  find a  bijection $\ell:V(G)\rightarrow \{1,1/2^\beta,\ldots ,1/\theta^\beta\}$,  such that ${\cal \overline{M}}(\ell)=\sum_{v_i\sim v_j}\ell(v_i)\ell(v_j)$ is minimized. Denote the  \textsf{MinPS} of $G$ as ${\cal \overline{M}}(G)=\min_\ell{\cal \overline{M}}(\ell)$. %Note that each edge in ${\cal M}(f)$ is computed only \emph{once} in the summation.
  %but each edge in $Var({\cal S}_\sigma)$ is computed \emph{twice} in the summation.
\end{problem}

By the similar arguments, we can obtain some parallel results with respect to Problem~\ref{graphprozipf}. First, when $G=K_\theta$, ${\cal \overline{M}}(\ell)$ is a constant for any vertex labeling. For Tur\'{a}n graphs, ${\cal \overline{M}}(T(n,r))$ can be achieved when the $k$-th part of vertices is labeled by $\left\{\frac{1}{((k-1)r+1)^\beta},\frac{1}{((k-1)r+2)^\beta},\ldots,\frac{1}{(kr)^\beta}\right\}$. However, for union of complete graphs or Tur\'{a}n graphs, we need to solve the following set-partition problem to estimate  ${\cal \overline{M}}(mK_r)$ and ${\cal \overline{M}}(mT(n,r))$.

\begin{question}
Find an equipartition of $\{1,1/2^\beta,\ldots,1/(mr)^\beta\}$ into $m$ subsets $S_1,S_2,\ldots,S_m$ in polynomial time, such that $\sum_{i=1}^m\left(\sum_{j\in S_i}j\right)^2$ is minimized.
\end{question}

Finally, we mention that the upper bounds of the maximum file size $A(n,k,\alpha,\rho)$ of $[(\theta, M), k,(n, \alpha, \rho)]$ DRESS in Eqs. (\ref{ub1}) and (\ref{ub2}) are still tight in some cases even if we impose the zero access-variance property. Indeed, Corollary \ref{optMin} provides some examples of zero access-variance optimal FR codes. For general FR codes with a limited access-variance, it is interesting that one can improve the upper bounds in Eqs. (\ref{ub1}) and (\ref{ub2}). However, it is not easy to apply their original proofs in \cite{el2010fractional} to the access-variance case. We leave this problem for future study.

%We assume that labels of chunks are exactly the indices of blocks, actually, we can require the labels of chunks are frequencies of blocks. Which will be more complicated but more useful in access-balancing problem. Problem~\ref{minvar} correspond to find a bijection $\sigma$ from $E$ to $F$, where $F$ is the frequency set of blocks. such that the access-variance $Var({\cal S}_\sigma)$ is minimized.
%
% find, cycles, some Tur\'an graphs, $t$-$(\theta,\alpha,\lambda)$ design and $TD(\alpha, \rho)$, have actually labeling attaining the minimum variance, and there are still some MinVar problems on non-magic Tur\'an graphs are quite interesting. Furthermore, if one remove the background of access balance issue in distribute storage code, the Problem \ref{graphpro} is worthwhile to study for itself.

%\bibliographystyle{abbrv}
%\bibliography{ref}

\begin{thebibliography}{10}

\bibitem{ivanvco2010extension}
L.~Bezegov{\'{a}} and J.~Ivan{\v{c}}o.
\newblock An extension of regular supermagic graphs.
\newblock {\em Discrete Mathematics}, 310(24):3571--3578, 2010.

\bibitem{bezegova2012characterization}
L.~Bezegov{\'a} and J.~Ivan{\v{c}}o.
\newblock A characterization of complete tripartite degree-magic graphs.
\newblock {\em Discussiones Mathematicae Graph Theory}, 32(2):243--253, 2012.

\bibitem{breslau1999web}
L.~Breslau, P.~Cao, L.~Fan, G.~Phillips, and S.~Shenker.
\newblock  Web caching and Zipf-like distributions: Evidence and implications.
\newblock in {\em Proceedings of the IEEE International Conference
on Computer Communications}, pages 126--134, 1999.


\bibitem{brummond2019kirkman}
W.~M. Brummond.
\newblock Kirkman systems that attain the upper bound on the minimum block sum,
  for access balancing in distributed storage.
\newblock {\em arXiv preprint arXiv:1906.02157}, 2019.



	\bibitem{chee2019access}
Y.~M. Chee, C.~J. Colbourn, H.~Dau, R.~Gabrys, A.~C. Ling, D.~Lusi, and
O.~Milenkovic.
\newblock Access balancing in storage systems by labeling partial steiner
systems.
\newblock {\em Designs, Codes and Cryptography}, 88(11):2361--2376, 2020.

\bibitem{cherkasova2004analysis}
L.~Cherkasova and M.~Gupta.
\newblock Analysis of enterprise media server workloads: access patterns,
  locality, content evolution, and rates of change.
\newblock {\em IEEE/ACM Transactions on Networking (TON)}, 12(5):781--794,
  2004.

\bibitem{1-factorizable}
A.~G. Chetwynd and A.~J.~W. Hilton.
\newblock Regular graphs of high degree are {$1$}-factorizable.
\newblock {\em Proc. London Math. Soc. (3)}, 50(2):193--206, 1985.

\bibitem{cidon2013copysets}
A.~Cidon, S.~Rumble, R.~Stutsman, S.~Katti, J.~Ousterhout, and M.~Rosenblum.
\newblock Copysets: Reducing the frequency of data loss in cloud storage.
\newblock In {\em Presented as part of the 2013 $\{$USENIX$\}$ Annual Technical
  Conference ($\{$USENIX$\}$$\{$ATC$\}$ 13)}, pages 37--48, 2013.

\bibitem{colbourn2010crc}
C.~J. Colbourn.
\newblock {\em CRC handbook of combinatorial designs}.
\newblock CRC press, 2010.

\bibitem{maxminsum}
H.~Dau and O.~Milenkovic.
\newblock {MaxMinSum Steiner} systems for access balancing in distributed
  storage.
\newblock {\em SIAM Journal on Discrete Mathematics}, 32(3):1644--1671, 2018.

\bibitem{dimakis2010network}
A.~G. Dimakis, P.~B. Godfrey, Y.~Wu, M.~J. Wainwright, and K.~Ramchandran.
\newblock Network coding for distributed storage systems.
\newblock {\em IEEE Transactions on Information Theory}, 56(9):4539--4551,
  2010.

\bibitem{doob1974generalizations}
M.~Doob.
\newblock Generalizations of magic graphs.
\newblock {\em Journal of Combinatorial Theory, Series B}, 17(3):205--217,
  1974.

\bibitem{doob1978characterizations}
M.~Doob.
\newblock Characterizations of regular magic graphs.
\newblock {\em Journal of Combinatorial Theory, Series B}, 25(1):94--104, 1978.



\bibitem{gallian2018dynamic}
J.~Gallian.
\newblock A dynamic survey of graph labeling (2018).
\newblock {\em The Electronic Journal of Combinatorics.}, 2018.

\bibitem{ivanvco2000supermagic}
J.~Ivan{\v{c}}o.
\newblock On supermagic regular graphs.
\newblock {\em Mathematica Bohemica}, 125(1):99--114, 2000.

\bibitem{jeurissen1988magic}
R.~Jeurissen.
\newblock Magic graphs, a characterization.
\newblock {\em European Journal of Combinatorics}, 9(4):363--368, 1988.

\bibitem{olmez2012repairable}
O.~Olmez and A.~Ramamoorthy.
\newblock Repairable replication-based storage systems using resolvable
  designs.
\newblock In {\em 2012 50th Annual Allerton Conference on Communication,
  Control, and Computing (Allerton)}, pages 1174--1181. IEEE, 2012.

\bibitem{OlmezFractional}
O.~Olmez and A.~Ramamoorthy.
\newblock Fractional repetition codes with flexible repair from combinatorial
  designs.
\newblock {\em IEEE Transactions on Information Theory}, 62(4):1565--1591, 2016.

\bibitem{pawar2011dress}
S.~Pawar, N.~Noorshams, S.~El~Rouayheb, and K.~Ramchandran.
\newblock Dress codes for the storage cloud: Simple randomized constructions.
\newblock In {\em 2011 IEEE International Symposium on Information Theory
  Proceedings}, pages 2338--2342. IEEE, 2011.

\bibitem{rashmi2009explicit}
K.~Rashmi, N.~B. Shah, P.~V. Kumar, and K.~Ramchandran.
\newblock Explicit construction of optimal exact regenerating codes for
  distributed storage.
\newblock In {\em 2009 47th Annual Allerton Conference on Communication,
  Control, and Computing (Allerton)}, pages 1243--1249. IEEE, 2009.

\bibitem{el2010fractional}
S.~El~Rouayheb and K.~Ramchandran.
\newblock Fractional repetition codes for repair in distributed storage
  systems.
\newblock In {\em 2010 48th Annual Allerton Conference on Communication,
  Control, and Computing (Allerton)}, pages 1510--1517. IEEE, 2010.

\bibitem{sedlavcek1963problem}
J.~Sedl{\'a}{\v{c}}ek.
\newblock Problem 27. theory of graphs and its applications.
\newblock In {\em Proc. Symp. Smolenice. Praha}, pages 163--164, 1963.

\bibitem{sedlavcek1976magic}
J.~Sedl{\'a}{\v{c}}ek.
\newblock On magic graphs.
\newblock {\em Mathematica Slovaca}, 26(4):329--335, 1976.

\bibitem{shah2011distributed}
N.~B. Shah, K.~V. Rashmi, P.~V. Kumar, and K.~Ramchandran.
\newblock Distributed storage codes with repair-by-transfer and
  nonachievability of interior points on the storage-bandwidth tradeoff.
\newblock {\em IEEE Transactions on Information Theory}, 58(3):1837--1852,
  2011.

\bibitem{shiu2002construction}
W.~C. Shiu, P.~C.~B. Lam, and S.-M. Lee.
\newblock On a construction of supermagic graphs.
\newblock {\em Journal of Combinatorial Mathematics and Combinatorial
  Computing}, 42:147--160, 2002.

\bibitem{silberstein2015optimal}
N.~Silberstein and T.~Etzion.
\newblock Optimal fractional repetition codes based on graphs and designs.
\newblock {\em IEEE Transactions on Information Theory}, 61(8):4164--4180,
  2015.

\bibitem{stewart1966magic}
B.~Stewart.
\newblock Magic graphs.
\newblock {\em Canadian Journal of Mathematics}, 18:1031--1059, 1966.

\bibitem{stewart1967supermagic}
B.~Stewart.
\newblock Supermagic complete graphs.
\newblock {\em Canadian Journal of Mathematics}, 19:427--438, 1967.

\bibitem{sun1994labeling}
G.~C. Sun, J.~Guan, and S.-M. Lee.
\newblock A labeling algorithm for magic graph.
\newblock {\em Congressus Numerantium}, pages 129--138, 1994.

\bibitem{zhu2014general}
B.~Zhu, K.~W. Shum, H.~Li, and H.~Hou.
\newblock General fractional repetition codes for distributed storage systems.
\newblock {\em IEEE Communications Letters}, 18(4):660--663, 2014.

\end{thebibliography}
\end{document}